\documentclass[runningheads]{llncs}
\usepackage{graphicx}
\usepackage{amsmath}
\usepackage{amssymb}
\usepackage{xspace}
\usepackage{geometry}
\usepackage[font=small, labelfont=bf]{caption}
\usepackage{subfigure}
\usepackage{hyperref}
\usepackage{wrapfig}
\usepackage{thm-restate}
\usepackage{todonotes}
\usepackage{paralist} 
\usepackage[ruled,lined,boxed,linesnumbered]{algorithm2e}


\spnewtheorem*{sketch}{Sketch of proof}{\itshape}{\rmfamily}

\newcommand{\DENSE}{%
\setlength{\labelwidth}{12pt}%
\setlength{\labelsep}{2pt}%
\setlength{\leftmargin}{\labelwidth}%
\addtolength{\leftmargin}{\labelsep}%
\setlength{\parsep}{0pt}%
\setlength{\itemsep}{0pt}%
\setlength{\topsep}{0pt}
}

\hypersetup{pdfborder={0 0 0.4}}

\newcommand{\remove}[1]{}
\newcommand{\euristicaUno}{\texttt{Shorten\-Host\-Switch}\xspace}
\newcommand{\euristicaDue}{\texttt{Search\-Maximal\-Planar}\xspace}
\newcommand{\algoritmoPlanare}{\texttt{PlanarDraw}\xspace}

\geometry{
   a4paper,         
   textwidth=12.2cm,  
   textheight=19.3cm, 
   heightrounded,   
   hratio=1:1,      
   vratio=4:5,      
}

\title{Visualizing Co-Phylogenetic Reconciliations%
\thanks{This paper appears in the Proceedings of the 25th International Symposium on Graph Drawing and Network Visualization (GD 2017). Please, refer to~\cite{cdmp-vcpr-17}. 
This research was partially supported by MIUR project ``MODE -- MOrphing graph Drawings Efficiently'', prot. 20157EFM5C\_001 and by {\em Sapienza} University of Rome project ``Combinatorial structures and algorithms for problems in co-phylogeny''.}}

\authorrunning{Calamoneri et al.}

\author{Tiziana Calamoneri \inst1 \and Valentino Di Donato \inst2 \and \\ 
        Diego Mariottini \inst2 \and Maurizio~Patrignani \inst2}

\institute{Computer Science Department, University of Rome ``Sapienza'', Rome, Italy \and 
  Engineering Department, Roma Tre University, Rome, Italy}

\graphicspath{{figures/}}

\begin{document}

\maketitle

\begin{abstract}
We introduce a hybrid metaphor for the visualization of the reconciliations of co-phylogenetic trees, that are mappings among the nodes of two trees. 
The typical application is the visualization of the co-evolution of hosts and parasites in biology.
Our strategy combines a space-filling and a node-link approach. Differently from traditional methods, it guarantees an unambiguous and `downward' representation whenever the reconciliation is time-consistent (i.e., meaningful). 
We address the problem of the minimization of the number of crossings in the representation, by giving a characterization of planar instances and by establishing the complexity of the problem.
Finally, we propose heuristics for computing representations with few crossings. 
\end{abstract}


\section{Introduction}

Producing readable and compact representations of trees has a long tradition in the graph drawing research field. In addition to the standard node-link diagrams, which include layered trees, radial trees, hv-drawings, etc., trees can be visualized via the so-called space-filling metaphors, which include circular and rectangular treemaps, sunbursts, icicles, sunrays, icerays, etc.~\cite{r-tda-13,s-ttvr-11}.

Unambiguous and effective representation of co-phylogenetic trees, that are pairs of phylogenetic trees with a mapping among their nodes, is needed in biological research.
A \emph{phylogenetic tree} is a full rooted binary tree (each node has zero or two children) representing the\remove{inferred} evolutionary relationships among related organisms. 
Biologists who study the co-evolution of species, such as hosts and parasites, start with a host phylogenetic tree $H$, a parasite tree $P$, and a mapping function~$\varphi$ (not necessarily injective nor surjective) from the leaves of~$P$ to the leaves of~$H$. 
The triple $\langle H,P, \varphi\rangle$, called \emph{co-phylogenetic tree}, is traditionally represented with a tanglegram drawing, that consists of a pair of plane trees whose leaves are connected by straight-line edges~\cite{bcef-gbtaa-09,bhtw-ffpad-09,bbbnosw-dcbt-12,ds-oloth-04,fkp-ctvcm-10,nvwh-dbtee-09,szh-trptn-11}. However, a tanglegram only represents the input of a more complex process that aims at computing a mapping $\gamma$, called {\em reconciliation}, that extends $\varphi$ and maps all the parasite nodes onto the host nodes.  

Given $H$, $P$, and $\varphi$, a great number of different reconciliations are possible. Some of them can be discarded, since they are not consistent with time (i.e. they induce contradictory constraints on the periods of existence of the species associated to internal nodes).
The remaining reconciliations are generally ranked based on some quality measure and only the optimal ones are considered.
Even so, optimal reconciliations are so many that biologists have to perform a painstaking manual inspection to select those that are more compatible with their understanding of the evolutionary phenomena.

In this paper we propose a new and unambiguous metaphor to represent reconciliations of co-phylogenetic trees (Section~\ref{se:model}). The main idea is that of representing $H$ in a suitable space-filling style and of using a traditional node-link style to represent~$P$. 
This is the first representation guaranteeing the downwardness of~$P$ when time-consistent (i.e., meaningful) reconciliations are considered. 
In order to pursue readability, we study the number of crossings that are introduced in the drawing of tree $P$ (tree $H$ is always planar): on the one hand, in Section~\ref{se:planarity} we characterize planar reconciliations, 
on the other hand, we show in Section~\ref{se:complexity} that reducing the number of crossings in the representation of the reconciliations is NP-complete. 
Finally, we propose heuristics to produce drawings 
with few crossings (Section \ref{se:heuristics}) and experimentally show their effectiveness and efficiency (Section \ref{se:experiments}). Details and full proofs can be found in the appendix.


\section{Background}\label{se:preliminaries}\label{se:related}

In this paper, whenever we mention a tree $T$, we implicitly assume that it is a {\em full rooted binary tree} with node set $\mathcal{V}(T)$ and arc set $\mathcal{A}(T)$, and that arcs are oriented away from the root $r(T)$ down to the set of leaves $\mathcal{V}_L(T) \subset \mathcal{V}(T)$ (see Appendix~\ref{ap:tree-defs} for formal definitions).
The \emph{lowest common ancestor} of two nodes $u,v \in \mathcal{V}(T)$, denoted $lca(u,v)$, is the last 
common node of the two directed paths leading from $r(T)$ to $u$ and $v$. 
Two nodes $u$ and $v$ are \emph{comparable} if $lca(u,v) \in \{u,v\}$, otherwise they are \emph{incomparable}.   


A \emph{tanglegram} $\langle T_1,T_2,G \rangle$ generalizes a co-phylogenetic tree and consists of two generic rooted trees $T_1$ and $T_2$ and a bipartite graph $G=(\mathcal{V}_L(T_1),\mathcal{V}_L(T_2),E)$ among their leaves. In a \emph{tanglegram drawing} of $\langle T_1,T_2,G \rangle$: (i) tree $T_1$ is planarly drawn above a horizontal line $l_1$ with its arcs pointing downward and its leaves on $l_1$, (ii) tree $T_2$ is planarly drawn below a horizontal line $l_2$, parallel to $l_1$, with its arcs pointing upward and its leaves on $l_2$, and (iii) edges of $G$, called \emph{tangles}, are straight-line segments drawn in the horizontal stripe bounded by $l_1$ and $l_2$ (see Fig.~\ref{fig:tanglegram}(a) in Appendix). A decade-old literature is devoted to tanglegram drawings (see e.g.~\cite{bcef-gbtaa-09,bhtw-ffpad-09,bbbnosw-dcbt-12,ds-oloth-04,fkp-ctvcm-10,nvwh-dbtee-09,szh-trptn-11}).
Finding a tanglegram drawing that minimizes the number of crossings among the edges in $E$ is known to be NP-complete, even if the trees are binary trees or if the graph $G$ is a matching~\cite{fkp-ctvcm-10}. 


A 
\emph{reconciliation} of the co-phylogenetic tree $\langle H, P, \varphi \rangle$ is a mapping 
$\gamma: \mathcal{V}(P) \rightarrow \mathcal{V}(H)$ that satisfies the following properties:
(i) for any $p \in \mathcal{V}_L(P)$, $\gamma(p) = \varphi(p)$, that is, $\gamma$ extends $\varphi$,
(ii) for any arc $(p_i,p_j) \in \mathcal{A}(P)$, $lca(\gamma(p_i),\gamma(p_j))) \neq \gamma(p_j)$, that is, a child 
$p_j$ of $p_i$ cannot be mapped to an ancestor of $\gamma(p_i)$ and
(iii) for any $p \in V \setminus \mathcal{V}_L(P)$ with children $p_1$ and $p_2$, $lca(\gamma(p),\gamma(p_1)) = \gamma(p)$ or $lca(\gamma(p),\gamma(p_2))=\gamma(p)$, that is, at least one of the two children is mapped in the subtree rooted at~$\gamma(p)$. 

The set of all reconciliations of $\langle H, P, \varphi \rangle$ is denoted $\mathcal{R}(H, P, \varphi)$.

Four types of events may take place in a reconciliation (see formal definitions in Appendix~\ref{ap:events}): {\em co-speciation}, when both the host and the parasite speciate; {\em duplication}, when the parasite speciates (but not the host) and both parasite children remain associated with the host; {\em loss}, when the host speciates but not the parasite, leading to the loss of the parasite in one of the two host children; and {\em host-switch}, when the parasite speciates and one child remains with the current host while the other child jumps to an incomparable host.

Each of the above events is usually associated with a penalty and 
the minimum cost reconciliations are searched (they can be computed with polynomial delay~\cite{Dal15,CORE-ILP}).
%
However, only reconciliations not violating obvious temporal constraints are of interest. 
A reconciliation $\gamma$ is \emph{time-consistent} if there exists a linear ordering $\pi$ of the parasites $\mathcal{V}(P)$ such that:\begin{inparaenum}[(i)]
\item for each arc $(p_1, p_2) \in \mathcal{A}(P)$, $\pi(p_1) < \pi(p_2)$; 
\item for each pair $p_1, p_2 \in \mathcal{V}(P)$ such that $\pi(p_1) < \pi(p_2)$, $\gamma(p_2)$ is not a proper ancestor of $\gamma(p_1)$.
\end{inparaenum}
%
%
Recognizing time-consistent reconciliations is a polynomial task~\cite{slxsvd-idlti-12,Dal15,CORE-ILP}, while producing exclusively time-consistent reconciliations is NP-complete~\cite{ofcl-cprpn-11,thl-sidlg-11}. This is why usually time-inconsistent reconciliations are filtered out in a post-processing step~\cite{eucalypt}. 


The available tools to compute reconciliations adopt three main conventions to represent them.
%
The simplest strategy, schematically represented in Fig.~\ref{fig:strategies}(a), represents the two trees by adopting the traditional node-link metaphor, where the nodes of~$P$ are drawn close to the nodes of $H$ they are associated to. 
Unfortunately, when several parasite nodes are associated to the same host node, the drawing becomes cluttered and the attribution of parasite nodes to host nodes becomes unclear. Further, even if $P$ was drawn without crossings (tree $H$ always is), the overlapping of the two trees produces a high number of crossings (see, for example, Figs.~\ref{fig:core-pa} and~\ref{fig:jane-4} in Appendix~\ref{ap:tools}).

\begin{figure}[tb]
\centering
\begin{tabular}{c @{\hspace{2em}} c @{\hspace{2em}} c}
\includegraphics[width=3.5cm]{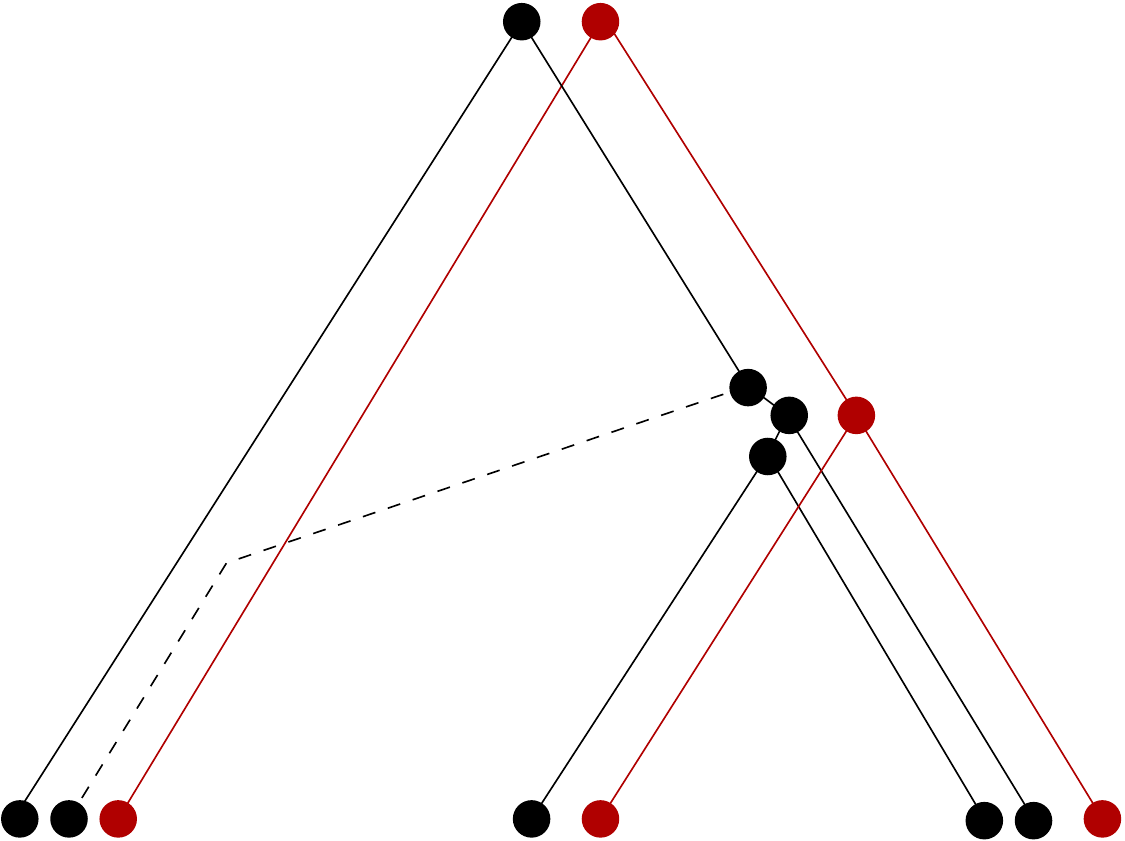} &
\includegraphics[width=3.5cm]{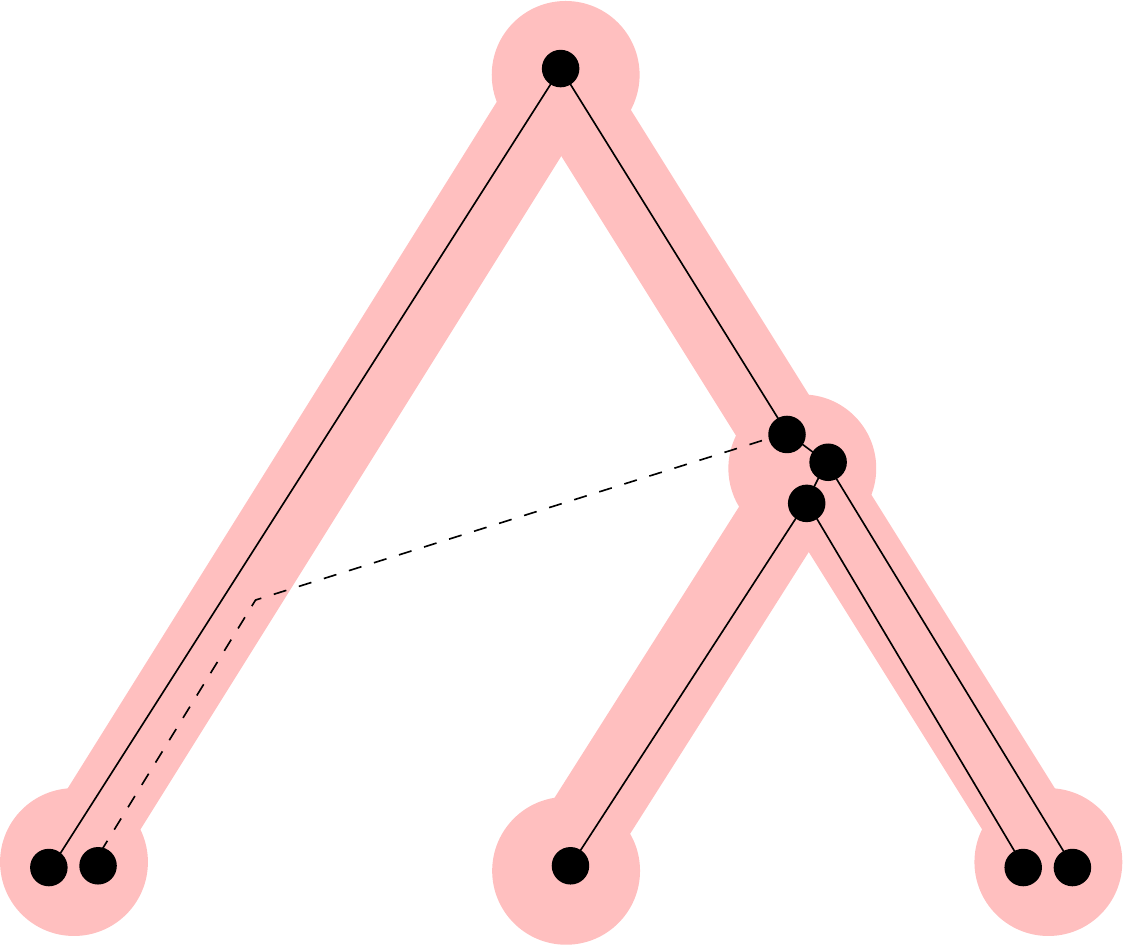} &
\includegraphics[width=3.5cm]{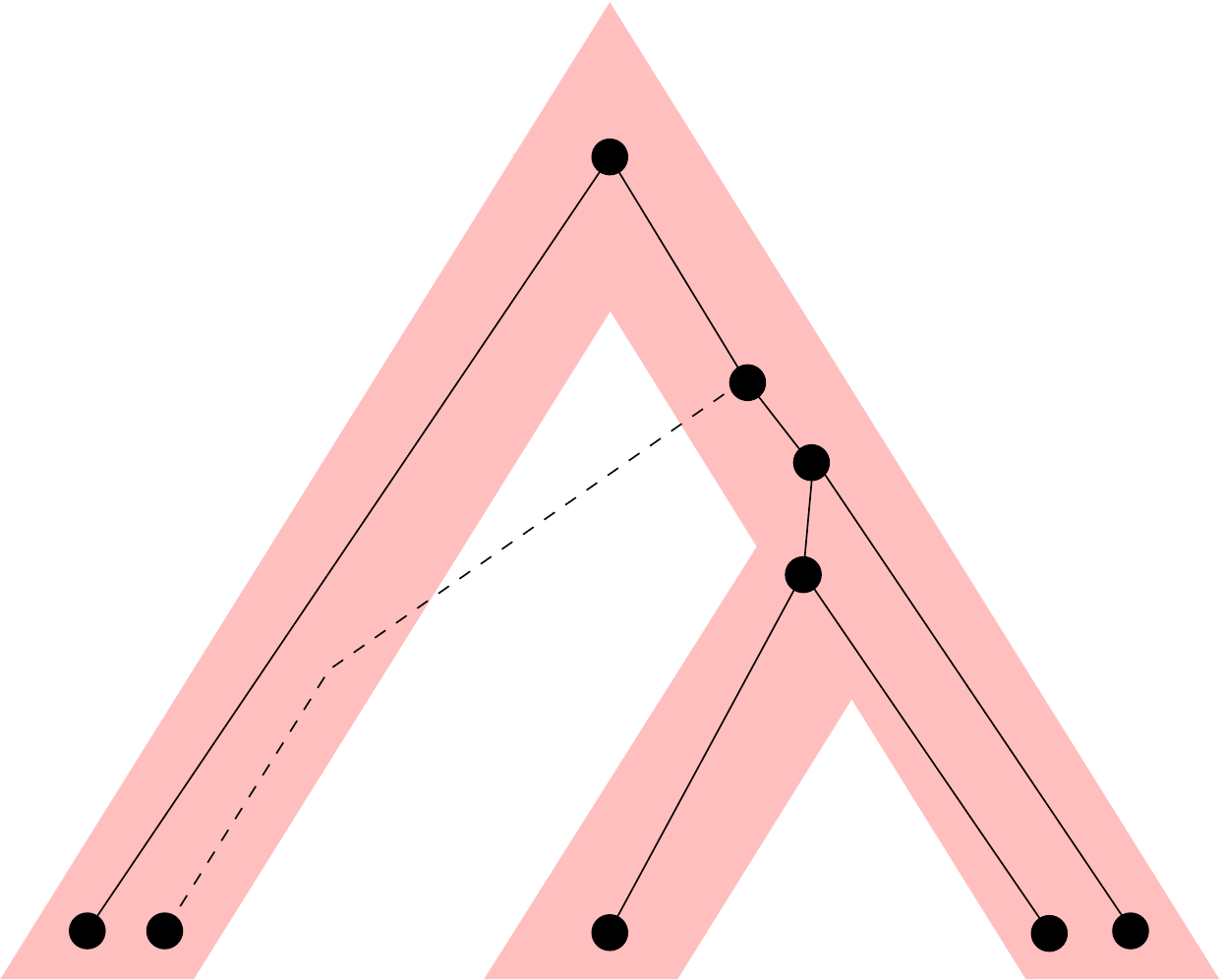} \\
(a) & (b) & (c)
\end{tabular}
\caption{Three visualization strategies for representing co-phylogenetic trees.}\label{fig:strategies}
\end{figure}


An alternative strategy (Fig.~\ref{fig:strategies}(b)) consists in representing $H$ as a background shape, such that its nodes are shaded disks and its arcs are thick pipes, while $P$ is contained in~$H$ and drawn in the traditional node-link style. This strategy is used, for example, by {\em CophyTrees}~\cite{eucalypt}, the viewer associated with the Eucalypt tool~\cite{Dal15}. 
The representation is particularly effective, as it is unambiguous and crossings between the two trees are strongly reduced, but it is still cluttered when a parasite subtree has to be squeezed inside the reduced area of a host node (see Fig.~\ref{fig:eucalypt} in Appendix~\ref{ap:tools}).

Finally, some visualization tools adopt the strategy of keeping the containment metaphor while only drawing thick arcs of $H$ and omitting host nodes (Fig.~\ref{fig:strategies}(c)). This produces a node-link drawing of the parasite tree drawn inside the pipes representing the host tree. Examples include {\em Primetv}~\cite{primetv} and {\em SylvX}~\cite{Sylvx}---see Figs.~\ref{fig:primetv} and~\ref{fig:sylvx} in Appendix~\ref{ap:tools}, respectively. Also this strategy is sometimes ambiguous, since it is unclear how to attribute parasites to hosts.



\section{A new model for the visualization of reconciliations}\label{se:model}

%

%
Inspired by recent proposals of adopting space-filling techniques to represent biological networks~\cite{tk-avpn-16}, and with the aim of overcoming the limitations of existing visualization strategies, we introduce a new hybrid metaphor for the representation of reconciliations. A space-filling approach is used to represent $H$, while tree $P$ maintains the traditional node-link representation. The reconciliation is unambiguously conveyed by placing parasite nodes inside the regions associated with the hosts they are mapped to.
%


\begin{figure}[tb]
\centering
\begin{tabular}{c @{\hspace{2em}} c @{\hspace{2em}} c}
\includegraphics[width=5.5cm]{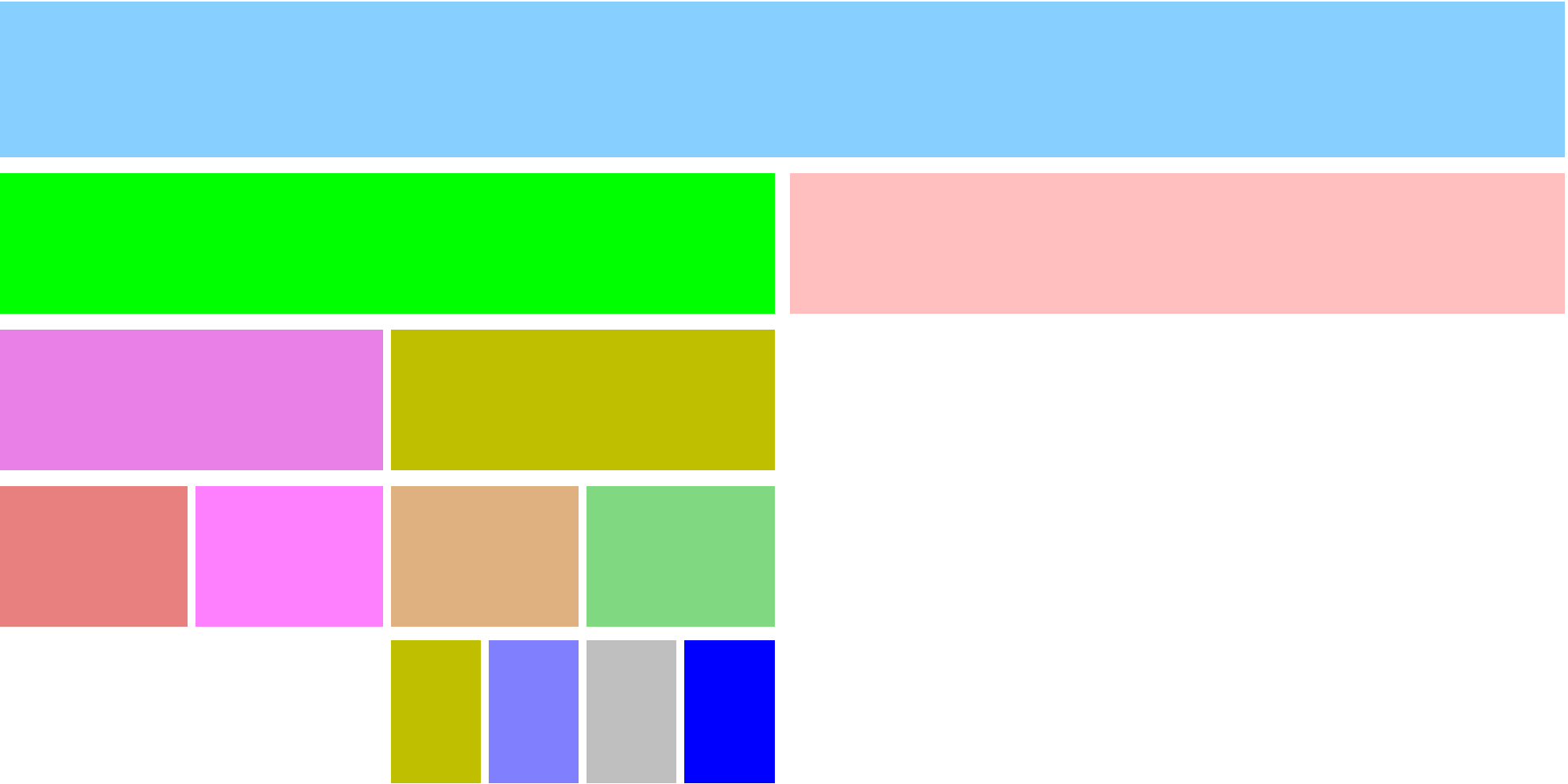} &
\includegraphics[width=5.5cm]{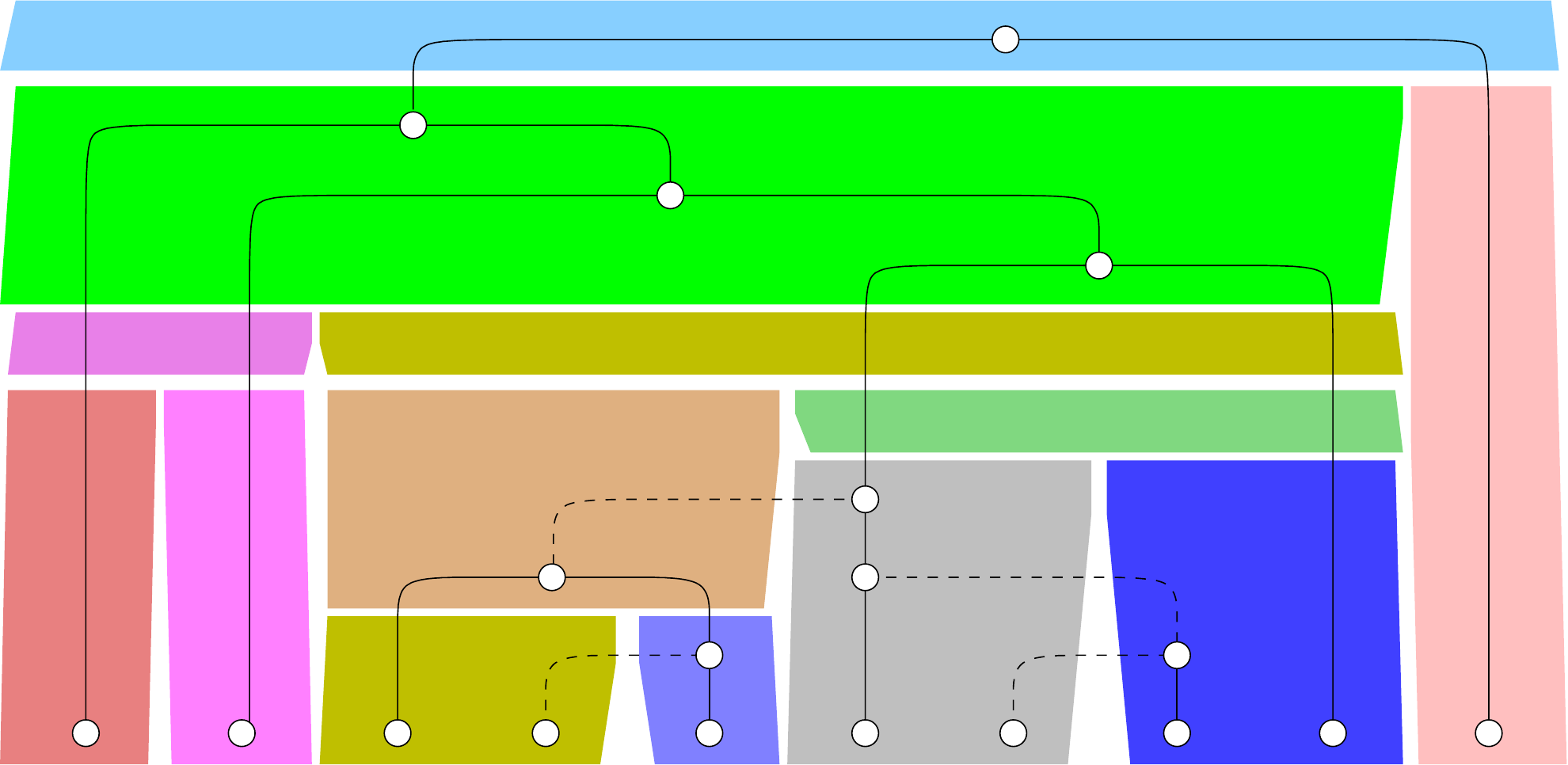} & \\
(a) & (b) 
\end{tabular}
\caption{(a) An icicle. (b) The representation adopted for trees $H$ and $P$.}\label{fig:icicle}
\end{figure}

More specifically, the representation of tree $H$ is a variant of a representation known in the literature with the name of \emph{icicle}~\cite{kl-ipbdh-83}. An {\em icicle} is a space-filling representation of hierarchical information in which nodes are represented by rectangles and arcs are represented by the contact of rectangles, such that the bottom side of the rectangle representing a node touches the top sides of the rectangles representing its children (see Fig.~\ref{fig:icicle}(a)). 
In our model, in order to contain parasite subtrees of different depths, we allow rectangles of different height. Also we force all leaves of $H$ (i.e.\ present-day hosts) to share the same bottom line that intuitively represents current time.


%
%
%
Formally, an {\em HP-drawing $\Gamma(\gamma)$ of $\gamma \in \mathcal{R}(H, P, \varphi)$} is the simultaneous representation of $H$ and $P$ as follows. Tree $H$ is represented in a space-filling fashion such that:
\begin{inparaenum}[(1)]
\item nodes of $H$ are represented by internally disjoint rectangles that cover the drawing area; 
the rectangle corresponding to the root of $H$ covers the top border of the drawing area while the rectangles corresponding to the leaves of $H$ touch the bottom border of the drawing area with their bottom sides; and 
%
\item arcs of $H$ are represented by the vertical contact of rectangles, the upper rectangle being the parent and the lower rectangle being the child.
%
\end{inparaenum}
%
Conversely, tree $P$ is represented in a 
node-link style such that:
\begin{inparaenum}[(1)]
\item each node $p \in \mathcal{V}(P)$ is drawn as a point in the plane inside the representation of the rectangle corresponding to node $\gamma(p)$; and 
\item each arc $(p_1,p_2) \in \mathcal{A}(P)$ is drawn as a vertical segment if $p_1$ and $p_2$ have the same $x$-coordinate; otherwise, it is drawn as a 
horizontal segment followed by a vertical segment.
%
\end{inparaenum}

It can be assumed that an HP-drawing only uses integer coordinates. In particular the corners of the rectangles representing the nodes of $H$ could exclusively use even coordinates and the nodes of $P$ could exclusively use odd coordinates.

Graphically, since the icicle represents a binary tree, we give the rectangles a slanted shape in order to ease the visual recognition of the two children of each node (see Fig.~\ref{fig:icicle}(b)). Also, the bend 
of an arc of $P$ is a small circular arc.


We say that HP-drawing $\Gamma(\gamma)$ is \emph{planar} if no pair of arcs of $P$ intersect except, possibly, at a common endpoint, and that  
it is \emph{downward} if, for each arc $(p_1,p_2) \in \mathcal{A}(P)$, parasite $p_1$ has a $y$-coordinate greater than that of parasite~$p_2$. 

\section{Planar instances and reconciliations}\label{se:planarity}

In this section we characterize the reconciliations that can be planarly drawn, 
showing that a time-consistent reconciliation is planar if and only if the corresponding co-phylogenetic tree admits a planar tanglegram drawing.  

\begin{restatable}{theorem}{theoremplanar}\label{th:planar}
Given a co-phylogenetic tree $\langle H,P, \varphi\rangle$, the following statements are equivalent:
\begin{inparaenum}[(1)]
\item $\langle H,P, \varphi\rangle$ admits a planar tanglegram drawing $\Delta$. 
\item Every time-consistent reconciliation $\gamma \in \mathcal{R}(H,P, \varphi)$ admits a planar downward HP-drawing~$\Gamma(\gamma)$.
\end{inparaenum}
\end{restatable}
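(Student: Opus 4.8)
The two implications share a common bridge which I would isolate first. It is essentially folklore for tanglegrams that $\langle H,P,\varphi\rangle$ admits a planar tanglegram drawing if and only if there exist planar embeddings of $H$ and of $P$, with induced left-to-right leaf orders $\sigma_H$ and $\sigma_P$, such that the tangles induced by $\varphi$ can be drawn without crossings in the two-layer setting with $\sigma_H$ on top and $\sigma_P$ below; equivalently, along the tangles there is no ``inversion'' of $\sigma_P$ against $\sigma_H$, and in particular all leaves of $P$ that $\varphi$ maps to a single leaf of $H$ are consecutive in $\sigma_P$. The plan is then to translate, in both directions, between this two-layer picture and the space-filling HP-picture.

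\textbf{From (2) to (1).} Assume first that at least one time-consistent reconciliation exists (the opposite case I would dispose of separately, arguing directly that $\langle H,P,\varphi\rangle$ is then so ``loose'' that a planar tanglegram drawing $\Delta$ exists). Take such a $\gamma$ and the planar downward HP-drawing $\Gamma(\gamma)$ granted by~(2). The icicle of $H$ inside $\Gamma(\gamma)$ is literally a planar embedding of $H$, which fixes $\sigma_H$; since the drawing of $P$ is planar and hangs downward with all its leaves inside the leaf rectangles of $H$ (which touch the bottom border), reading the parasite leaves left to right gives an order $\sigma_P$ realized by a planar embedding of $P$. Compatibility is then immediate: two parasite leaves placed in the same host rectangle send their tangles to one point and do not cross, while for two parasite leaves in different rectangles the left one sits in the leftmost rectangle, so $\varphi$ respects the order on them; hence there is no inversion and the bridge yields $\Delta$.

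\textbf{From (1) to (2).} Fix orders $\sigma_H,\sigma_P$ coming from a planar tanglegram drawing, and fix an arbitrary time-consistent reconciliation $\gamma$ together with a witnessing linear order $\pi$ of $\mathcal{V}(P)$. Realize $H$ as an icicle respecting $\sigma_H$ (rectangle heights decided later), place each parasite leaf $p$ into the rectangle of $\varphi(p)$ so that the parasite leaves read out in the order $\sigma_P$ (possible by compatibility), and give every internal parasite node the $x$-coordinate of the leftmost leaf of its subtree in $P$, for the embedding induced by $\sigma_P$; with this rule the arc to a left child is a single vertical segment and the arc to a right child is a rightward horizontal segment followed by a downward vertical one, both pieces confined to the $x$-span of the subtree. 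I would then fix all $y$-coordinates by a single topological sort. Let $D$ be the digraph on $\mathcal{V}(P)$ together with one vertex $\beta_v$ for each internal node $v$ of $H$ (representing the horizontal line through the bottom side of $v$'s rectangle), where an arc $a\to b$ means ``$a$ receives a strictly smaller $y$-coordinate than $b$'': put $p'\to p$ whenever $\pi(p')>\pi(p)$, put $\beta_v\to p$ whenever $\gamma(p)=v$, put $p\to\beta_v$ whenever $\gamma(p)$ is a child of $v$, and put $\beta_c\to\beta_v$ whenever $c$ is a child of $v$. The conceptual heart of this direction is that $D$ is acyclic: its arcs among $\mathcal{V}(P)$ form the total order reverse to $\pi$, its arcs among the $\beta_v$ form the order of $H$ directed towards the root, and any mixed cycle, read off at its parasite vertices $p^{(1)},\dots,p^{(k)}$, would yield $\pi(p^{(1)})>\cdots>\pi(p^{(k)})>\pi(p^{(1)})$ --- each step coming either directly from one of the $p'\to p$ arcs or, when two consecutive parasites are joined through a chain of $\beta$-vertices, from the fact that then $\gamma(p^{(i+1)})$ is a proper ancestor of $\gamma(p^{(i)})$, so the second time-consistency condition forces $\pi(p^{(i+1)})<\pi(p^{(i)})$. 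A topological order of $D$ provides the $y$-coordinates (with the top border of the area above all of them and the bottom border at $0$), and downwardness is built into the construction.

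What remains is to check that the drawing of $P$ is planar, and this is the step I expect to be the main obstacle: not deep, but a careful case analysis on pairs of L-shaped arcs, exploiting that all parasites receive pairwise distinct $y$-coordinates that strictly decrease along root-to-leaf paths of $P$. Two horizontal pieces lie at distinct heights, unless they emanate from a common parent, which however contributes only one horizontal piece; two vertical pieces sharing an $x$-coordinate belong to arcs on a common leftmost path of $P$, so their $y$-spans are nested and meet at most at a shared parasite; and the horizontal piece of an arc with parent $p_1$, which sits at height $y_{p_1}$ and inside the $x$-span of the subtree of $p_1$, can meet the vertical piece of another arc only at a common endpoint, because $y_{p_1}$ is above every descendant of $p_1$ and below every proper ancestor of $p_1$. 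A similar argument rules out a parasite lying in the interior of a non-incident arc. Assembling these facts gives the planar downward HP-drawing $\Gamma(\gamma)$, which completes the proof.
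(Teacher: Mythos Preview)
Your direction~(2)$\Rightarrow$(1) is fine and close in spirit to the paper's (the paper simply mirrors $P$ below the bottom line of $\Gamma(\gamma)$ and places each host at the center of its rectangle). The hand-wave about the vacuous case is shaky, but since a time-consistent reconciliation always exists (map every internal parasite to $r(H)$; this has only duplications and losses), the case does not arise.

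The direction~(1)$\Rightarrow$(2) has a genuine gap in your $x$-coordinate rule. You assign to an internal parasite $p$ the $x$-coordinate of the \emph{leftmost leaf of its subtree in $P$}. But an HP-drawing requires $p$ to lie inside the rectangle of $\gamma(p)$, and the leftmost leaf of $p$'s subtree may be reached through a host-switch arc, hence be mapped by $\varphi$ to a host leaf that is \emph{not} a descendant of $\gamma(p)$. In that case the $x$-coordinate you give to $p$ falls outside the $x$-span of $\gamma(p)$'s rectangle, and no choice of rectangle heights or widths can repair this while keeping the icicle disjoint and respecting~$\sigma_H$. A small example: let $H$ have root $h_0$ with leaf child $h_2$ and internal child $h_1$ whose leaves are $h_3,h_4$; let $P$ have root $p_0$ with leaf child $p_2$ and internal child $p_1$ whose leaves are $p_3,p_4$; set $\varphi(p_3)=h_3$, $\varphi(p_4)=h_2$, $\varphi(p_2)=h_4$, $\gamma(p_0)=h_0$, $\gamma(p_1)=h_1$. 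The planar tanglegram uses $\sigma_H=(h_2,h_3,h_4)$ and $\sigma_P=(p_4,p_3,p_2)$; the leftmost leaf under $p_1$ is $p_4$, whose $x$-coordinate sits in the rectangle of $h_2$, not of $h_1=\gamma(p_1)$.

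The paper avoids exactly this by copying the $x$-coordinate of $p$ from its \emph{non-host-switch child} (which always exists by property~(iii) of reconciliations); inductively, this lands in the $x$-span of $\gamma(p)$. The paper also keeps the $y$-assignment much simpler than your DAG~$D$: it just sets $y(p)=2\pi(p)+1$ using the time-consistent order $\pi$ directly, after first inserting degree-two dummy parasites along each arc of $P$ to account for losses, so that every parent--child step in the augmented tree moves to an adjacent host. Your topological-sort construction is not wrong in itself, but once the $x$-rule is fixed it is unnecessary, and your subsequent planarity case analysis implicitly relies on the broken $x$-rule (the claim that a vertical piece at a given $x$ lies on a ``common leftmost path of $P$'' fails once $x$-coordinates are inherited from non-host-switch children rather than leftmost leaves, so that part would need to be redone as well).
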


\begin{sketch}
First, we prove that $(2)$ implies $(1)$.
Consider a planar drawing $\Gamma(\gamma)$ of $\gamma \in \mathcal{R}(H,P, \varphi)$ and let $l$ be the horizontal line passing through the bottom border of $\Gamma(\gamma)$. Observe that the leaves of $P$ lie above $l$.
Construct a tanglegram drawing $\Delta$ of $\langle H,P, \varphi\rangle$ as follows:
\begin{inparaenum}[(a)]%
\item Draw $H$ by placing each node $h \in \mathcal{V}(H)$ in the center of the rectangle representing $h$ in $\Gamma(\gamma)$ and by representing each arc $a \in \mathcal{A}(H)$ as a suitable curve between its incident nodes;
\item draw $P$ in $\Delta$ as a mirrored drawing with respect to $l$ of the drawing of $P$ in $\Gamma(\gamma)$;
\item connect each leaf $p \in L(P)$ to the host $\gamma(p)$ with a straight-line segment.
\end{inparaenum}
It is immediate that $\Delta$ is a tanglegram drawing of $\langle H,P,\varphi\rangle$ and that it is planar whenever $\Gamma(\gamma)$ is. 


Proving that $(1)$ implies $(2)$ is more laborious. 
Let $\Delta$ be a planar tanglegram drawing of $\langle H,P, \varphi\rangle$ (Fig.~\ref{fig:tanglegram}(a)). We construct a drawing $\Gamma(\gamma)$ of the given time-consistent reconciliation $\gamma \in \mathcal{R}(H,P, \varphi)$ as follows.
%
First, insert into the arcs of $P$ dummy nodes of degree two to represent losses, obtaining a new tree $P'$ (Fig.~\ref{fig:tanglegram}(b)). 
Since $\gamma$ is time-consistent, consider any ordering $\pi'$ of $\mathcal{V}(P')$ consistent with $H$. Remove from $\pi'$ the leaves of $P$ and renumber the remaining nodes obtaining a new ordering $\pi$ from $1$ to $|\mathcal{V}(P')-\mathcal{V}_L(P')|$. 
%
Regarding $y$-coordinates: all the leaves of $P'$ have $y$-coordinate $1$, that is, they are placed at the bottom of the drawing, while each internal node $p \in \mathcal{V}(P')\setminus\mathcal{V}_L(P')$ has $y$-coordinate $2\pi(p)+1$ (see Fig.~\ref{fig:downward}(a)).
Regarding $x$-coordinates: each leaf $p \in \mathcal{V}_L(P)$ has $x$-coordinate $2{\sigma}(p)+1$, where ${\sigma}(p)$ is the left-to-right order of the leaves of $T_2$ in $\Delta$. The $x$-coordinate of an internal node $p$ of $P$ is copied from one of its children $p_1$ or $p_2$, arbitrarily chosen if none of them is connected by a host-switch, the one (always present) that is not connected by a host-switch otherwise. 

Let $h$ be a node of $\mathcal{V}(H)$; 
rectangle $R_h$, representing $h$ in $\Gamma$, has the minimum width that is sufficient to span all the parasites contained in the subtree $T_h(H)$ of $H$ rooted at $h$ (hence, it spans the interval $[x_{min}-1,x_{max}+1]$, where $x_{min}$ and $x_{max}$ are the minimum and maximum $x$-coordinates of a parasite contained in $T_h(H)$, respectively). The top border of $R_h$ has $y$-coordinate $y_{\textsc{min}}-1$, where $y_{\textsc{min}}$ is the minimum $y$-coordinate of a parasite node contained in the parent of $h$. The bottom border of $R_h$ is $y_{min}-1$, where $y_{min}$ is the minimum $y$-coordinate of a parasite node contained in $h$ (see Fig.~\ref{fig:downward}(b)).
%

The proof concludes by showing that the obtained representation $\Gamma(\gamma)$ is planar and downward (See Appendix~\ref{ap:theorem-planar}). \qed
\end{sketch}

We remark that a statement analogous to the one of Theorem~\ref{th:planar} can be proved also for the visualization strategy schematically represented 
in Fig.~\ref{fig:strategies}(b) and adopted, for example, by {\em CophyTrees}~\cite{eucalypt}.

The algorithm we actually implemented, called~\algoritmoPlanare, is a refinement of the one described in the proof of Theorem~\ref{th:planar}. It assigns to the parent parasite an $x$-coordinate that is intermediate between those of the children whenever both children are not host-switches and it produces a more compact representation with respect to the $y$-axis (see Figs.~\ref{fig:icicle}(b) and~\ref{fig:pmp}). 

\begin{figure}[tb]
\centering
\includegraphics[width=12cm]{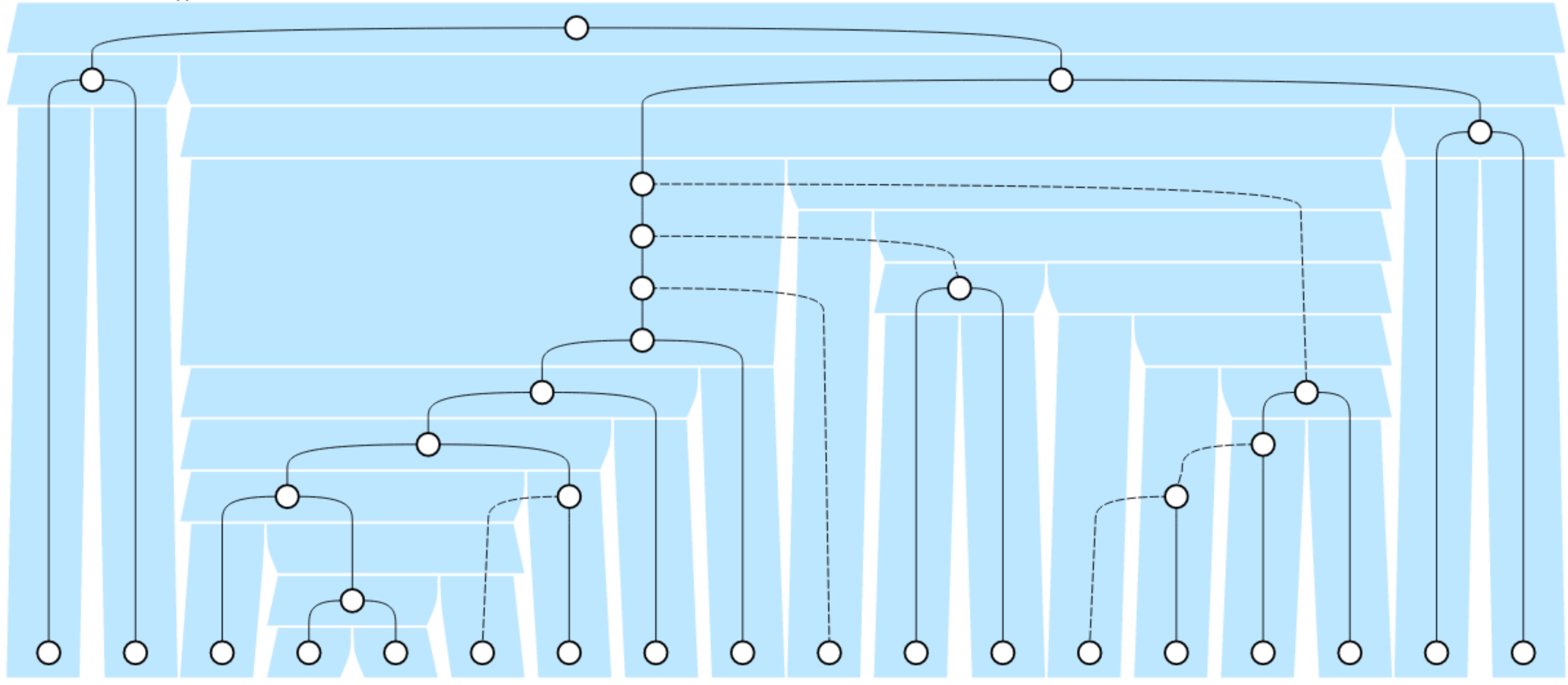}
\caption{A planar HP-drawing of a reconciliation of the co-phylogenetic tree of Pelican \& Lice (MP) computed by Algorithm \algoritmoPlanare.}
\label{fig:pmp}
\end{figure}

\section{Minimizing the number of crossings}\label{se:complexity}

%
In this section we focus on non-planar instances and prove that computing an HP-drawing of a reconciliation with the minimum number of crossings is NP-complete.
%
Given a reconciliation $\gamma \in \mathcal{R}(H, P, \varphi)$ and a constant $k$, we consider the decision problem {\sc Reconciliation Layout (RL)} that asks whether there exists an HP-drawing of $\gamma$ that has at most $k$ crossings.
We prove that \textsc{RL} is NP-hard 
by reducing to it the NP-complete problem {\sc Two-Trees Crossing Minimization (\textsc{TTCM})}~\cite{fkp-ctvcm-10}. 
The input of \textsc{TTCM} 
consists of two binary trees $T_1$ and $T_2$, whose leaf sets are in one-to-one correspondence, and a constant $k$. The question is whether $T_1$ and $T_2$ admit a tanglegram drawing with at most $k$ crossings among the tangles.  
%
In~\cite{bbbnosw-dcbt-12} it is shown that \textsc{TTCM} remains NP-complete even if the input trees are two complete binary trees of height $h$ (hence, with $2^{h}$ leaves). We reduce this latter variant 
to \textsc{RL}.

\begin{restatable}{theorem}{theoremcomplexity}\label{th:complexity}
Problem {\sc RL} is NP-complete.
\end{restatable}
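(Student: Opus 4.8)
The plan is to prove membership in NP and NP-hardness separately. Membership is routine: given an HP-drawing, we can restrict attention to drawings on a polynomially sized integer grid (as already observed, rectangle corners use even coordinates and parasite nodes odd coordinates, and the construction in Theorem~\ref{th:planar} shows a grid of size polynomial in $|\mathcal{V}(P)|$ and $|\mathcal{V}(H)|$ always suffices), so a drawing is a polynomial-size certificate whose crossings can be counted in polynomial time. The substance is the hardness reduction from the restricted version of \textsc{TTCM} in which $T_1$ and $T_2$ are complete binary trees of height $h$ with $2^h$ leaves in one-to-one correspondence, which is NP-complete by~\cite{bbbnosw-dcbt-12}.

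The core idea of the reduction is to play the host tree $H$ against the parasite tree $P$ so that an HP-drawing is forced to behave like a tanglegram drawing of $\langle T_1,T_2 \rangle$. First I would set $H := T_1$ and $P := T_2$, and define $\varphi$ to be the given leaf bijection. Because both trees are complete of the same height, there is an obvious ``co-speciating'' reconciliation $\gamma$ that maps the root of $P$ to the root of $H$, each level of $P$ to the corresponding level of $H$, and in particular $\gamma(p) = \varphi(p)$ on the leaves; one checks that this $\gamma$ satisfies conditions (i)--(iii) of a reconciliation (every node maps into the subtree of its parent's image, with both children mapping properly downward). In the icicle representation of $H = T_1$ induced by this $\gamma$, the rectangles of the leaves of $H$ occupy disjoint horizontal intervals along the bottom border in exactly the left-to-right leaf order of $T_1$; symmetrically, the internal structure of $P = T_2$ is drawn above, with its leaves placed inside the corresponding leaf rectangles of $H$. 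The arcs of $P$ connecting the bottom-level parasite nodes to the rest of $P$ then thread through the icicle in a way that mirrors the tangles of a tanglegram drawing: a crossing between two such arcs corresponds to a crossing between two tangles, and vice versa. I would make this precise by exhibiting, for every tanglegram drawing $\Delta$ of $\langle T_1,T_2\rangle$ with $c$ crossings, an HP-drawing of $\gamma$ with exactly $c$ crossings (place parasite leaves according to the $T_2$-leaf order read off from $\Delta$, route the rest of $P$ as in the refinement of Theorem~\ref{th:planar}'s construction), and conversely, for every HP-drawing of $\gamma$ with $c$ crossings, a tanglegram drawing of $\langle T_1,T_2\rangle$ with at most $c$ crossings (reading the leaf permutation of $P$ off the HP-drawing and the leaf permutation of $H$ off the icicle, then applying the $(2)\Rightarrow(1)$ direction of Theorem~\ref{th:planar}, or a quantitative refinement of it that tracks crossings rather than merely planarity). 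Setting the threshold $k$ equal on both sides then gives the equivalence, and since the instance of \textsc{RL} has size polynomial in that of \textsc{TTCM}, NP-hardness follows.

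The main obstacle I expect is the quantitative, crossing-counting version of the correspondence, rather than its qualitative shadow. Theorem~\ref{th:planar} only asserts equivalence of \emph{planarity}; here I need that the \emph{number} of crossings transfers exactly (or at least is controlled in both directions). The delicate point is that an HP-drawing has more freedom than a tanglegram: parasite internal nodes can be placed at many $x$-coordinates and arcs are L-shaped (a horizontal segment followed by a vertical one), so I must argue that this extra freedom cannot be exploited to beat the optimal tanglegram crossing number---intuitively, pushing bends around inside the icicle only relocates crossings without eliminating them, because the cyclic/linear order in which $P$'s subtrees must leave each host rectangle is dictated by $\gamma$ and by the disjointness of the $H$-rectangles. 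Making this ``bends don't help'' argument rigorous---likely by a normalization/uncrossing lemma showing any HP-drawing of $\gamma$ can be transformed, without increasing crossings, into one in a canonical form from which a tanglegram drawing with no more crossings is read off---is the step that needs the most care. A secondary technical point is to double-check that restricting \textsc{RL} to this class of instances (complete binary $H$, complete binary $P$, co-speciating $\gamma$) is legitimate, i.e.\ that the reduction indeed lands inside \textsc{RL} as defined; this is immediate once $\gamma$ is verified to be a valid reconciliation, which is a short check.
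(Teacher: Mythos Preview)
Your proposal has a genuine gap at precisely the point you flag as delicate, and an earlier error that makes the clean version of your reduction impossible.

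First, the ``obvious co-speciating reconciliation'' you describe does not exist for an arbitrary leaf bijection $\psi$. You write that ``every node maps into the subtree of its parent's image, with both children mapping properly downward''; but if $\gamma$ is level-preserving and both children of every $p$ land in the subtree of $\gamma(p)$, then $\gamma(p_1),\gamma(p_2)$ must be the two children of $\gamma(p)$. Propagating this from the leaves forces $\psi(\ell_1),\psi(\ell_2)$ to be siblings in $T_1$ whenever $\ell_1,\ell_2$ are siblings in $T_2$, which fails for generic $\psi$. You can repair this by allowing host-switches (only one child in the subtree), or by taking the all-root map $\gamma_r$, but then your picture of the HP-drawing ``mirroring'' a tanglegram collapses: the arcs of $P$ no longer correspond one-to-one to tangles, and crossings among internal arcs of $P$, horizontal segments, and the tangle-like vertical segments all interact.

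Second, even granting some valid $\gamma$, the quantitative correspondence you need is not a refinement of Theorem~\ref{th:planar}; it is a new statement. In a tanglegram the only crossings are between tangle edges, whereas in your HP-drawing crossings can occur between any two $L$-shaped arcs of $P$, and whether two such arcs cross depends on the free $x$- and $y$-coordinates of internal parasite nodes, not just on two leaf permutations. Your ``bends don't help'' normalization would have to show that this extra freedom neither decreases nor increases the optimum relative to the tanglegram optimum, and you give no argument for either direction.

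The paper avoids all of this by a gadget reduction rather than a direct identification. It builds $H$ with two disjoint subtrees playing the roles of $T_1$ and $T_2$, plus auxiliary leaves $h_3,h_6,h_7$; it encodes each tangle $(l_{1,i},l_{2,j})$ by a tiny parasite subtree that host-switches once into the $T_1$-copy (to $h_{1,i}$) and once into the $T_2$-copy (to $h_{2,j}$); and it uses ``sewing tree'' gadgets of size $k'+1$ between $h_3$ and $h_6$ (resp.\ $h_7$) to force, in any drawing with at most $k'$ crossings, the relative placement of the $T_1$- and $T_2$-subtrees. With this rigid layout, the host-switch arcs behave exactly like tangles up to a fixed additive offset, and one sets $k'=k+2^h(2^h-1)$. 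This sidesteps entirely the normalization lemma you would need.
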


\begin{sketch}
Problem \textsc{RL} is in NP by exploring all possible HP-drawings of~$\gamma$.
Let $I_{\textsc{TTCM}} = \langle T_1, T_2, \psi, k \rangle$ be an instance of \textsc{TTCM}, where $T_1$ and $T_2$ are complete binary trees of height $h$, $\psi$ is a one-to-one mapping between $\mathcal{V}_L(T_1)$ and $\mathcal{V}_L(T_2)$, and $k$ is a constant. We show how to build an equivalent instance $I_{\textsc{RL}} =\langle \gamma \in \mathcal{R}(H, P,\varphi), k' \rangle$ of \textsc{RL}. 

\begin{figure}[tb]
\centering
\includegraphics[width=11cm]{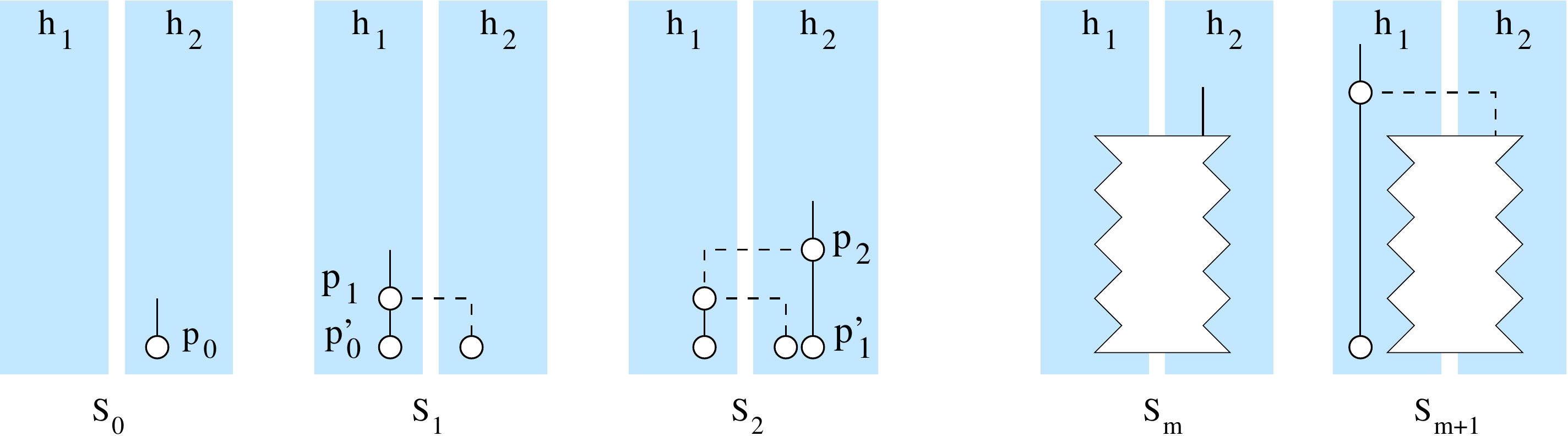}
\caption{Sewing trees $S_0$, $S_1$, $S_2$, and $S_{m+1}$ obtained from $S_m$.}\label{fig:sewing}
\end{figure}

First we introduce a gadget, called `sewing tree', that will help in the definition of our instance. A \emph{sewing tree} is a subtree of the parasite tree whose nodes are alternatively assigned to two host leaves $h_1$ and $h_2$ as follows. A single node $p_0$ with $\gamma(p_0)=h_2$ is a sewing tree $S_0$ of size $0$ and root $p_0$. Let $S_m$ be a sewing tree of size $m$ and root $p_m$ such that $\gamma(p_m)=h_2$ ($\gamma(p_m)=h_1$, respectively). In order to obtain $S_{m+1}$ we add a node $p_{m+1}$ with $\gamma(p_{m+1})=h_1$ ($\gamma(p_{m+1})=h_2$, respectively) and two children, $p_m$ and $p'_m$, with $\gamma(p'_m)=h_1$ ($\gamma(p'_m)=h_2$, respectively). See Fig.~\ref{fig:sewing} for examples of sewing trees. 
Intuitively, a sewing tree has the purpose of making costly from the point of view of the number of crossings the insertion of a host node $h_3$ between hosts $h_1$ and $h_2$, whenever $h_3$ contains several vertical arcs of $P$ towards leaves of the subtree rooted at $h_3$.

\begin{figure}[h]
\centering
\includegraphics[width=11cm]{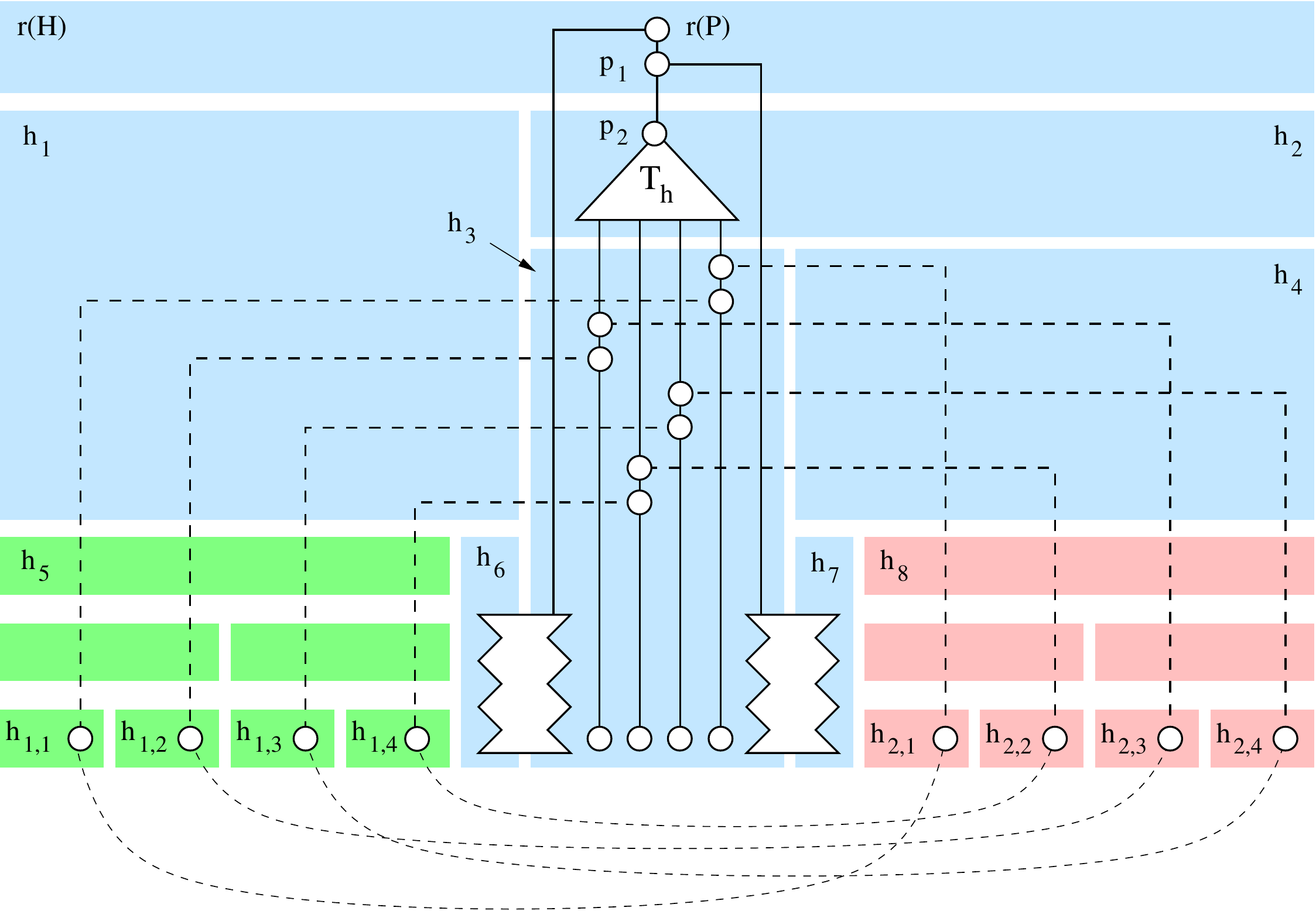}
\caption{
The construction of the instance of \textsc{RL} starting from an instance of \textsc{TTCM} with $h=2$. Filled green and pink are the subtrees of $H$ whose embeddings correspond to the embeddings of $T_1$ and $T_2$, respectively.}
\label{fig:reduction}
\end{figure}

Nodes $r(H), h_1, h_2, \dots, h_8$ of the host tree $H$ and their relationships are depicted in Fig.~\ref{fig:reduction}. Rooted at $h_5$ and $h_8$ we have two complete binary trees of height~$h$. Intuitively, these two subtrees of $H$ correspond to $T_1$ and $T_2$, respectively (they are drawn filled green and filled pink in Fig.~\ref{fig:reduction}). Hence, the leaves $l_{1,1}, l_{1,2}, \dots, l_{1,2^h}$ of $T_1$ are associated to the leaves $h_{1,1}, h_{1,2}, \dots, h_{1,2^h}$ of the subtree rooted at $h_5$, and, similarly, the leaves $l_{2,1}, l_{2,2}, \dots, l_{2,2^h}$ of $T_2$ are associated to the leaves $h_{2,1}, h_{2,2}, \dots, h_{2,2^h}$ of the subtree rooted at $h_8$. 

The root $r(P)$ of $P$ has $\gamma(r(P))=r(H)$. One child of $r(P)$ is the root of a sewing tree between $h_3$ and $h_6$. The other child $p_1$, with $\gamma(p_1)=r(H)$, has one child that is the root of a sewing tree between $h_3$ and $h_7$, and one child $p_2$, with $\gamma(p_2)=h_2$. Parasite $p_2$ is the root of a complete binary tree $T_h$ of height $h$, whose internal nodes are assigned to $h_2$, while the leaves are assigned to $h_3$.  
Each one of the $2^h$ leaves of $T_h$ is associated with a tangle of the instance $I_{\textsc{TTCM}}$. Namely, suppose $e=(l_{1,i},l_{2,j})$ is a tangle edge in the instance $I_{\textsc{TTCM}}$. Then, an arbitrary leaf $p_e$ of $T_h$ is associated with $e$. Node $p_e$ has children $p_{1,i}$, with $\gamma(p_{1,i})=h_{1,i}$, and $p'_e$, with $\gamma(p'_e)=h_3$. Node $p'_e$, in turn, has children $p_{2,j}$, with $\gamma(p_{2,j})=h_{2,j}$, and $p''_e$, with $\gamma(p''_e)=h_3$.
Finally, we pose $k'=k+2^h\cdot(2^h-1)$.

The proof concludes by showing that instance $I_{\textsc{TTCM}}$ is a yes instance of {\sc TTCM} if and only if instance $I_{\textsc{RL}}$ is a yes instance of {\sc RL} (see Appendix \ref{ap:theorem-complexity}).\qed
\end{sketch}

Since in the proof of Theorem~\ref{th:complexity} a key role is played by host-switch arcs, one could wonder whether an instance without host-switches is always planar. This is not the case: 
for any non-planar time-consistent reconciliation $\gamma \in \mathcal{R}(H, P,\varphi)$, there exists a time-consistent
reconciliation $\gamma_r \in \mathcal{R}(H, P,\varphi)$ that maps all internal nodes of $P$ to $r(H)$ and that
has no host-switch. If the absence of host-switches could guarantee planarity, $\gamma_r$ would be planar and, by Theorem~\ref{th:planar}, also $\gamma$ would be planar, leading to a contradiction.
%
Indeed, it is not difficult to construct (see Figure~\ref{fig:crossings}) reconciliations without host-switches and not planar.

\section{Heuristics for drawing reconciliations with few crossings}\label{se:heuristics}

Theorem~\ref{th:complexity} shows that a drawing of a reconciliation with the minimum number of crossings cannot be efficiently found. 
For this reason, we propose two heuristics aiming 
at producing HP-drawings with few crossings (Fig.~\ref{fig:compare} shows two examples of non-planar HP-drawings produced by the heuristics).
In the following we will briefly describe them.

\begin{figure}[tb]
\centering
\subfigure[]{\includegraphics[width=4.5cm]{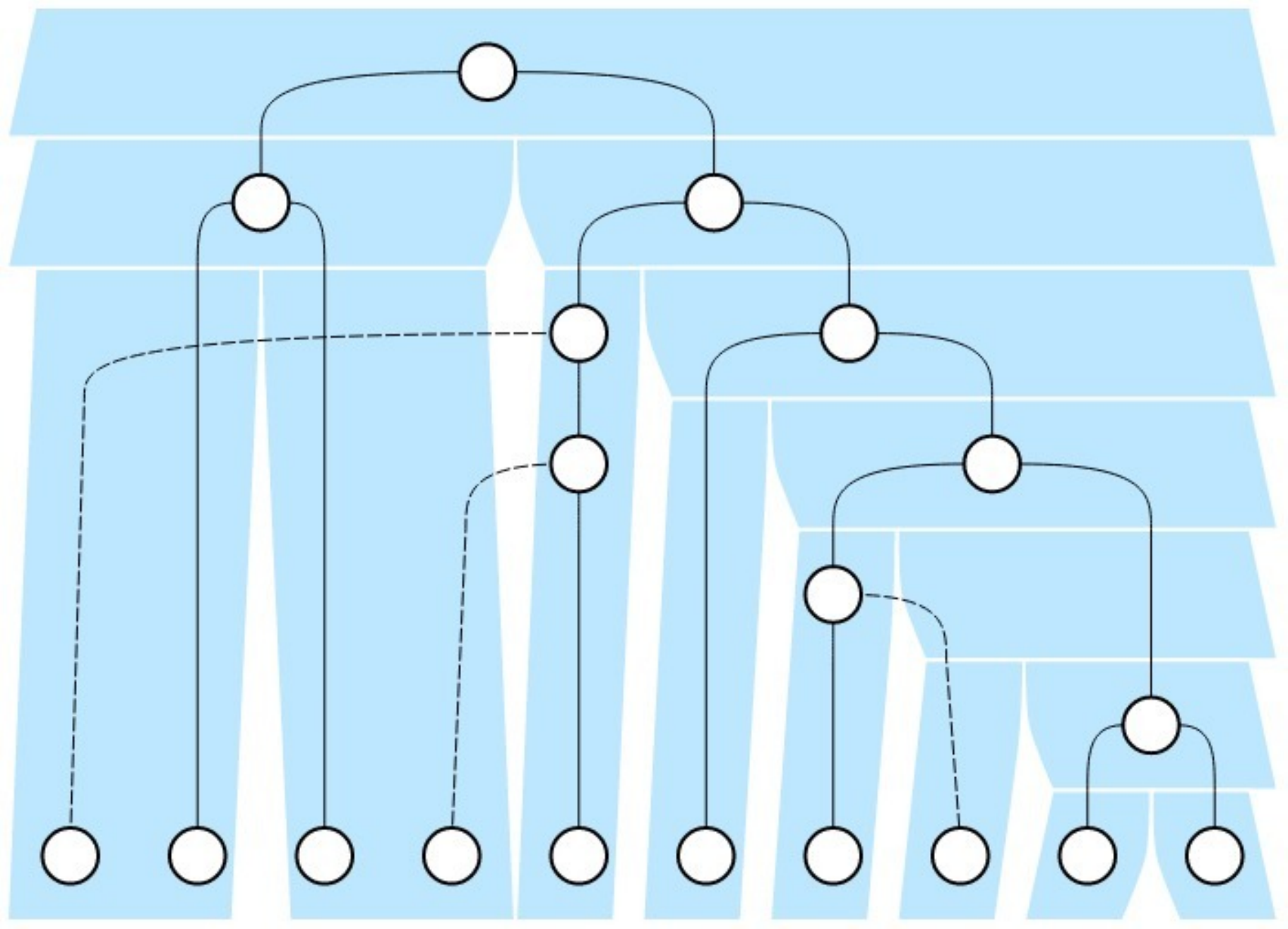}\label{fig:heur2}}
\hspace{1cm}
\subfigure[]{\includegraphics[width=4.5cm]{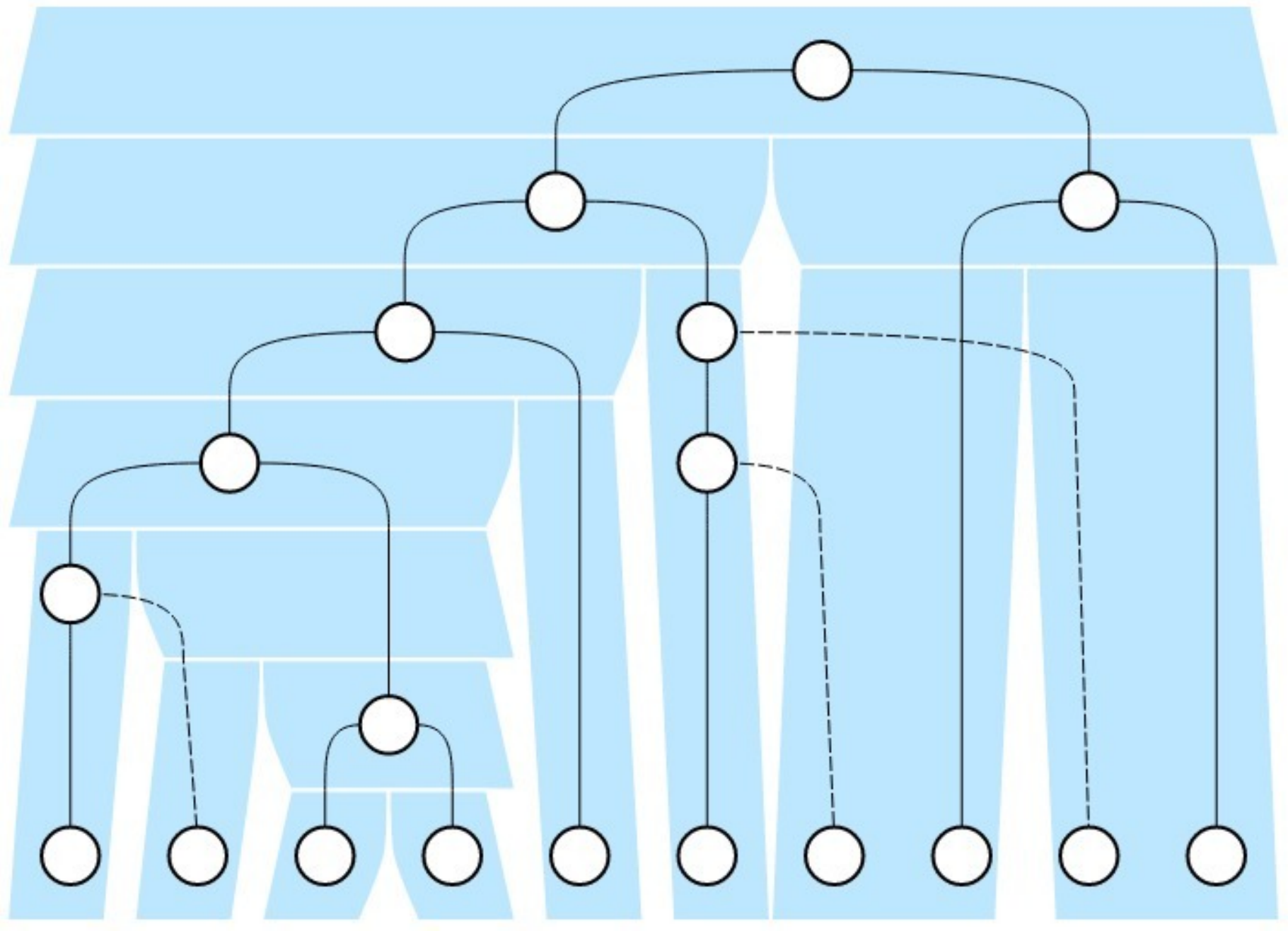}\label{fig:heur1}}
\caption{(1) An HP-drawing of a reconciliation of Gopher \& Lice drawn by \euristicaDue. (2) The same instance drawn by \euristicaUno.}\label{fig:compare}
\end{figure}

%
%
\subsection{Heuristic \euristicaDue}

This heuristic is based on the strategy of first drawing a large planar sub-instance and then adding non-planar arcs.
We hence construct a maximal planar subgraph $G_{pl}$ of tanglegram $\langle H, P, \varphi \rangle$
by adding to it one by one the following objects:
(i) all nodes of $H$ and of $P$; 
(ii) all arcs of $H$;
(iii) edge $(r(H), r(P))$; 
(iv) for each $l_p \in \mathcal{V}_L(P)$, edge $(l_p, \varphi(l_p))$;
(v) for each $p \in \mathcal{V}(P) \setminus \mathcal{V}_L(P)$, edge $(p, p')$, where $p'$ is any child of $p$ that is not a host-switch,
while the arc from $p$ to the sibling of $p'$
 is added to a set \texttt{missingArcs};
(vi) all arcs from \texttt{missingArcs} that is possible to add without introducing crossings
(all arcs that have not been inserted in $G_{pl}$ are stored in a set of \texttt{non-planarArcs}).  
A planar embedding of the graph $G_{pl}$ is used as input for Algorithm~\algoritmoPlanare so obtaining a planar drawing of part of reconciliation $\gamma$; arcs in \texttt{non-planarArcs} are added in a post-processing step. 

\remove{
\begin{verbatim}

MIGLIORIE: USO DI SKIPLIST:
	- al posto di provare arco per arco ogni volta, si prova a fare dei tentativi anche con più archi.
		- ogni sqrt(ArchiRimossi) si effettua la seguente prova:
			inserisco al posto di un solo arco sqrt(archiRimossi.length) archi.
			se il grafo rimane planare -> il salto è riuscito e quindi vado avanti con il nuovo grafo con tutti gli archi considerati.
			se il grafo non risulterà planare -> il salto non è riuscito e quindi si torna al grafo prima della prova del salto e si procede arco per arco.
		- con questo salto abbiamo che: 
			- se i salti vanno tutti a vuoto abbiamo fatto girare GDT un numero di volte pari a (archiRimossi.length + sqrt(archiRimossi.length));
			- se un salto va a segno archiRimossi.length
			- se tutti i salti vanno a segno sqrt(archiRimossi.length)


\end{verbatim}
}

%
%
\subsection{Heuristic \euristicaUno}

This heuristic is based on the observation that `long' host-switch arcs are more likely to cause crossings than `short' ones. Hence, this heuristic searches for an embedding of $H$ that reduces the distance between the end-nodes of host-switch arcs of $P$.
To do this, as a preliminary step, \euristicaUno chooses the embedding of $H$ with a preorder traversal as follows.
Let $v \in \mathcal{V}(H)$ be the current node of the traversal. Consider the set of nodes of $H$ that are ancestors or descendants of $v$. 
The removal of this set would leave two connected components, one on the left, denoted $V_{v,left}(H) \subseteq \mathcal{V}(H)$ and one on the right, denoted $V_{v,right}(H) \subseteq \mathcal{V}(H)$. Denote by $V_{v,left}(P)$ ($V_{v,right}(P)$, respectively) the set of parasite nodes mapped to some node in $V_{v,left}(H)$ ($V_{v,right}(H)$, respectively).
Moreover, denote by $V_v(P) \subseteq \mathcal{V}(P)$ the set of the parasite nodes mapped to the subtree of $H$ rooted at~$v$. 

%
%
If $v \in \mathcal{V}_L(H)$ no embedding choice has to be taken for $v$.
Otherwise let $v_1$ and $v_2$ be its children. 
For $i \in \{1,2\}$ and $X \in \{left,right\}$ compute the number $h_{v_i,X}$ of the host-switch arcs from $V_{v_i}(P)$ to $V_{v,X}(P)$ or vice versa. 
If $h_{1,right}+h_{2,left} > h_{2,right}+h_{1,left}$ then $v_1$ is embedded as the right child and $v_2$ as the left child of $v$, otherwise $v_2$ will be the right child and $v_1$ the left child.

Observe that the sets $V_{v,left}(P)$ and $V_{v,right}(P)$ can be efficiently computed while descending $H$. Namely, we start with $V_{r(H),left}(P)=V_{r(H),right}(P)=\emptyset$ and,  
supposing $v_{l}$ and $v_{r}$ are chosen to be the left and right children of $v$, respectively, we set 
$V_{v_{l},left}(P)=V_{v,left}(P)$, $V_{v_{l},right}(P)=V_{v,r}(P) \cup V_{v_{r}}$, 
$V_{v_{r},left}(P)=V_{v,left}(P) \cup V_{v_{l}}$, and $V_{v_{r},right}(P)=V_{v,right}(P)$.

It remains to describe how \euristicaUno places parasite nodes inside the representation of host nodes. First, we temporarily assign to each node $p \in \mathcal{V}(P)$ the lower $x-$ and $y-$coordinates inside $\gamma(p)$ (observe that all nodes mapped to the same host are overlapped).
For the leaves $\mathcal{V}(P)$ the temporary $y$-coordinate is definitive and only the $x$-coordinate has to be decided. 
We order the parasite leaves $p_1, p_2, \dots, p_k$ inside each host leaf $v_k$ as follows. We divide the leaves into two sets $L_{v, left}(P)$ and $L_{v,right}(P)$, where $L_{v, left}(P)$ contains the leaves associated with $v$ that have a parent with lower $x$-coordinates and $L_{v, right}(P)$ contains the remaining leaves associated with $v$. We order the set $L_{v, left}(P)$ ($L_{v, right}(P)$, respectively) ascending (descending, respectively) based on the $y$-coordinates of their parents. We place the set $L_{v, left}(P)$ and then the $L_{v,right}(P)$ inside $v$ according to their orderings. Once the leaves of $P$ have been placed, the remaining internal nodes of $P$ are placed according to the same algorithm used for planar instances by \algoritmoPlanare.


\section{Experimental evaluation}\label{se:experiments}


We collected standard co-phylogenetic tree instances from the domain literature. Table~\ref{ta:instances} shows their properties. 

\begin{table}
\centering
\begin{tabular} { | l | c | c | c | c | }
\hline
\textbf{Instance} & \textbf{~Acronym~} & \textbf{~\# hosts~} & \textbf{~\# par.~} & \textbf{~Planar~}\\ 
\hline\hline

\texttt{Caryophyllaceae \& Microbotryum}~\cite{CORE-ILP}   		& \texttt{CM} & 35 & 39 & No \\ \hline
\texttt{Stinkbugs \& Bacteria}~\cite{CORE-ILP}             		& \texttt{SB} & 27 & 23 & Yes\\ \hline

\texttt{Encyrtidae \& Coccidae}~\cite{Dal15}   				& \texttt{EC} & 13 & 19 & Yes \\ \hline
\texttt{Fishs \& Dactylogyrus}~\cite{Dal15}         			& \texttt{FD} & 39 & 101 & No \\ \hline
\texttt{Gopher \& Lice}~\cite{Dal15}                    		& \texttt{GL} & 15 & 19 & No \\ \hline
\texttt{Seabirds \& Chewing Lice}~\cite{Dal15}           	& \texttt{SC} & 21 & 27 & No \\ \hline
\texttt{Rodents \& Hantaviruses}~\cite{Dal15}       			& \texttt{RH} & 67 & 83 & No \\ \hline 
\texttt{Smut Fungi \& Caryophill. plants}~\cite{Dal15}   & \texttt{SFC} & 29 & 31 & No\\ \hline

\texttt{Pelican \& Lice (ML)}~\cite{Dal15}                	& \texttt{PML} & 35 & 35 & Yes \\ \hline
\texttt{Pelican \& Lice (MP)}~\cite{Dal15}                	& \texttt{PMP} & 35 & 35 & Yes \\ \hline
\texttt{Rodents \& Pinworms }~\cite{Dal15}                	& \texttt{RP} & 25 & 25 & No \\ \hline
\texttt{Primates \& Pinworms}~\cite{Dal15}                   & \texttt{PP} & 71 & 81 & No \\ \hline

\texttt{COG2085}~\cite{Dal15} 								& \texttt{COG2085} & 199 & 87 & No \\ \hline
\texttt{COG4965}~\cite{Dal15} 								& \texttt{COG4965} & 199 & 59 & No \\ \hline

\texttt{COG3715}~\cite{Dal15} 								& \texttt{COG3715} & 199 & 79 & No \\ \hline
\texttt{COG4964}~\cite{Dal15} 								& \texttt{COG4964} & 199 & 53 & No \\ \hline
\end{tabular}
%
\smallskip
\caption{The co-phylogenetic trees used to generate the datasuite.}\label{ta:instances}
\end{table}

Since reconciliations obtained from planar co-phylogenetic trees are always planar, we restricted our experiments to non-planar instances. 
In order to obtain a datasuite of reconciliations we used the Eucalypt tool~\cite{Dal15} to produce the set of minimum-cost reconciliations of each instance with costs 0, 2, 1, and 3 for co-speciation, duplication, loss, and host-switch, respectively. We configured the tool to filter out all time-inconsistent reconciliations based on the algorithm in~\cite{thl-sidlg-11}. Also, we bounded to 100 the reconciliations of each instance. 







We implemented the two heuristics \euristicaDue and \euristicaUno in JavaScript (but we used Python for accessing the file system and the GDToolkit library~\cite{dd-g-13} for testing planarity) and run the experiments on a Linux laptop with 7.7 GiB RAM and quadcore i5-4210U 1.70 GHz processor. 

\begin{table}
\centering

\begin{tabular}{ |c|c|c|c|c|c|c|c|c|c|c|c|c|c|}

    \cline{3-14}
	\multicolumn{2}{l}{} &
	\multicolumn{4}{|c|}{\euristicaUno} &
	\multicolumn{4}{|c|}{\texttt{SearchMaximalPlanar}$^*$} &
	\multicolumn{4}{|c|}{\euristicaDue} \\

    \cline{3-14}
	\multicolumn{2}{l}{} &
	\multicolumn{3}{|c|}{\textbf{\#Crossings}} &
	\multicolumn{1}{|c|}{\textbf{Avg}} &
	\multicolumn{3}{|c|}{\textbf{\#Crossings}} &
	\multicolumn{1}{|c|}{\textbf{Avg}} &
	\multicolumn{3}{|c|}{\textbf{\#Crossings}} &
	\multicolumn{1}{|c|}{\textbf{Avg}} \\

    \cline{1-5}
    \cline{7-9}
    \cline{11-13}

	\textbf{Inst.} & \textbf{\#Rec.} & 
	\textbf{Max} & \textbf{Min} & \textbf{Avg} & \textbf{ms} &
	\textbf{Max} & \textbf{Min} & \textbf{Avg} & \textbf{ms} &
	\textbf{Max} & \textbf{Min} & \textbf{Avg} & \textbf{ms} \\
\hline
\hline

\texttt{CM} 	& 64 & 30 & 15 & 21 & 0.5 
                & 21 & 13 & 17 & 644 
                & 20 & 10 & 16 & 485 
                \\ \hline

\texttt{FD} 	& 80 & 84 & 55 & 69 & 1 
                & 108 & 74 & 92 & 7289 
                & 110 & 67 & 91 & 4596 
                \\ \hline
\texttt{GL} 	& 2 & 1 & 1 & 1 & 0 & 1 & 1 & 1 & 180 & 2 & 2 & 2 & 67 \\ \hline
\texttt{PP} 	& 72 & 6 & 2 & 3 & 1 & 4 & 3 & 3 & 4840 & 2 & 1 & 1 & 1154 \\ \hline
\texttt{RH}		& 100 & 11 & 11 & 11 & 2 & 11 & 9 & 10 & 1710 & 15 & 10 & 12 & 1701 \\ \hline 
\texttt{RP} 	& 3 & 4 & 2 & 3 & 1 & 3 & 3 & 3 & 737 & 3 & 3 & 3 & 195 \\ \hline
\texttt{SC} 	& 1 & 6 & 6 & 6 & 0 & 4 & 4 & 4 & 499 & 4 & 4 & 4 & 166 \\ \hline
\texttt{SFC} 	& 16 & 22 & 11 & 17 & 0 & 16 & 12 & 13 & 412 & 20 & 11 & 15 & 355 \\ \hline 

\texttt{COG2085} & 100 & 80 & 58 & 70 & 7 & 95 & 84 & 89 & 20540 & 99 & 68 & 82 & 17270 \\ \hline
\texttt{COG4965} & 100 & 125 & 79 & 97 & 8 & 68 & 57 & 60 & 9901 & 65 & 52 & 58 & 5636 \\ \hline


\end{tabular}    
\smallskip
\caption{The results of the experiments.}\label{ta:results}
\end{table}

Table~\ref{ta:results} shows the results of the experiments. Planar instances \texttt{SB}, \texttt{EC}, \texttt{PML}, and \texttt{PMP} were not used to generate reconciliations. Also, instances \texttt{COG3715} and \texttt{COG3715} did not produce any time-consistent reconciliation. 
For all the other phylogenetic-trees, the second column of Table~\ref{ta:results} shows the number of reconciliations computed by Eucalypt (we bounded to 100 the reconciliations of \texttt{RH}, \texttt{COG2085}, and \texttt{COG4965}). Table~\ref{ta:results} is vertically divided into three sections, each devoted to a different heuristics. 
Each section shows the minimum, maximum, and average number of crossings
and the average computation time for the HP-drawings produced by the heuristics on the reconciliations obtained for the phylogenetic-tree specified in the first column. 

The section labeled \texttt{SearchMaximalPlanar}$^*$ shows the results of \euristicaDue where we computed the embedding of tree $H$ (which is the most expensive algorithmic step) once for all the reconciliations of the same instance. Hence, differently from the other two sections, the computation times reported in this section refer to the sum of computation times for all the reconciliations obtained from the same instance. 

From Table~\ref{ta:results} it appears that heuristic \euristicaDue is much slower than \euristicaUno. This could have been predicted, since \euristicaDue runs a planarity test several times. However, the gain in terms of crossings is questionable. Although there are instances where \euristicaDue appears to outperform \euristicaUno (for example, \texttt{CM}, \texttt{PP}) this is hardly a general trend. We conclude that aiming at planarity is not the right strategy for minimizing crossings in this particular application context. 

The strategy of computing the embedding of $H$ once for all reconciliations of the same co-phylogenetic tree (central section labeled \texttt{SearchMaximalPlanar}$^*$ of Table~\ref{ta:results}) seems to be extremely effective in reducing computation times. For example, on instance \texttt{COG2085}, where this heuristics needed 20.5 seconds, it actually used 205 msec per reconciliation, about 11\% of the time needed by \euristicaDue, at the cost of very few additional crossings.



\section{Conclusions and Future Work}

This paper introduces a new and intriguing simultaneous visualization problem, i.e.\ producing readable drawings of the  reconciliations of co-phylogenetic trees. Also, a new metaphor is proposed that takes advantage both of the space-filling and of the node-link visualization paradigms. We believe that such a hybrid strategy could be effective for the simultaneous visualization needs of several application domains. 

As future work, we would like to address the problem of visually exploring and analyzing sets of reconciliations of the same co-phylogenetic tree, which is precisely the task that several researchers in the biological field need to perform. 
Heuristic \texttt{SearchMaximalPlanar}$^*$ is a first step in this direction, since it maintains the mental map of the user by fixing the drawing of~$H$.
Finally, we would like to adapt heuristics for the reduction of the crossings of tanglegram drawings, such as those in~\cite{bhtw-ffpad-09,nvwh-dbtee-09,szh-trptn-11}, to our problem and we would like to perform user tests to assess the effectiveness of the proposed metaphor.

\subsection*{Acknowledgments}

We thank Riccardo Paparozzi for first experiments on the visualization of co-phylogenetic trees. 
Moreover, we are grateful to Marie-France Sagot and Blerina Sinaimeri for proposing us the problem and for the interesting discussions.

\bibliographystyle{splncs03}
\bibliography{bibliography}


\newpage
\appendix

\newgeometry{left=2cm,right=2cm,top=2cm,bottom=2cm}

\section{Formal Definitions Omitted from Section~\ref{se:preliminaries}}\label{ap:defs}

%
%
\subsection{Definition of Full Rooted Binary Tree}\label{ap:tree-defs}

A \emph{rooted tree} $T=(V,A)$ is a set ${\cal V}(T)$ of $n$ nodes and a set ${\cal A}(T)$ of $n-1$ directed arcs $(u,v)$, with $u,v \in {\cal V}(T)$, such that: (i) each node $v$ is the endpoint of exactly one arc with the exception of one node called the \emph{root} of~$T$ and denoted $r(T)$ and (ii) there is a direct path from $r(T)$ to any other node $n \in \mathcal{V}(T)$. 
The \emph{depth} $d(v)$ of a node $v \in \mathcal{V}(T)$ is the length (number of edges) of the direct path from $r(T)$ to $v$.
The tree is said to be \emph{binary} if each node $u$ has at most two outgoing arcs $(u,v_1)$ and $(u,v_2)$, and is said to be \emph{full binary} if each node has either two or zero outgoing arcs. Nodes with two outgoing arcs are said to be \emph{internal} and their set is denoted ${\cal V}_I(T)$, while nodes with zero outgoing arcs are said to be \emph{leaves} and their set is denoted as~${\cal V}_L(T)$, so being ${\cal V}(T)={\cal V}_I(T) \cup {\cal V}_L(T)$. An \emph{ancestor} of a node $u$ is any node on the path from $r(T)$ to $u$ (including $r(T)$ and $u$). 

%
%
\subsection{Definition of Co-Evolutionary Events}\label{ap:events}


In a reconciliation, four types of events may take place~\cite{C98,PC98} (refer to Fig.~\ref{fig:events}): 

\begin{description}

\item[Co-speciation.] When, for the two children $p_1$ and $p_2$ of $p$, $\gamma(p_1)$ and $\gamma(p_2)$ are incomparable and $lca(\gamma(p_1),\gamma(p_2)) = \gamma(p)$. Intuitively, both parasite and host speciate.

\item[Duplication.] When for the two children $p_1$ and $p_2$ of $p$, $lca(\gamma(p_1),\gamma(p_2)) \in \{\gamma(p_1), \gamma(p_2)\}$, that is, the children of $p$ are mapped to comparable nodes. Intuitively, the parasite speciates but the host does not.

\item[Loss.] Whenever an arc $(p_i,p_j)$ corresponds to a path of length $k$ from $\gamma(p_i)$ to $\gamma(p_j)$, then we have $k-1$ losses in the intermediate nodes of $H$. Intuitively, we have a loss each time the host speciates but the parasite does not. 

\item[Host-switch.] When, for an arc $(p_i,p_j)$, $lca(\gamma(p_i),\gamma(p_j)) \neq \gamma(p_i)$. Intuitively, a host-switch is the unlikely event that a child $p_j$ of parasite $p_i$ is transferred to a host that is not a descendant of $\gamma(p_i)$. Due to Property (3) of reconciliations, a parasite $p_i$ may be the source of at most one host-switch arc.
\end{description}

\begin{figure}
\centering
\includegraphics[width=6cm]{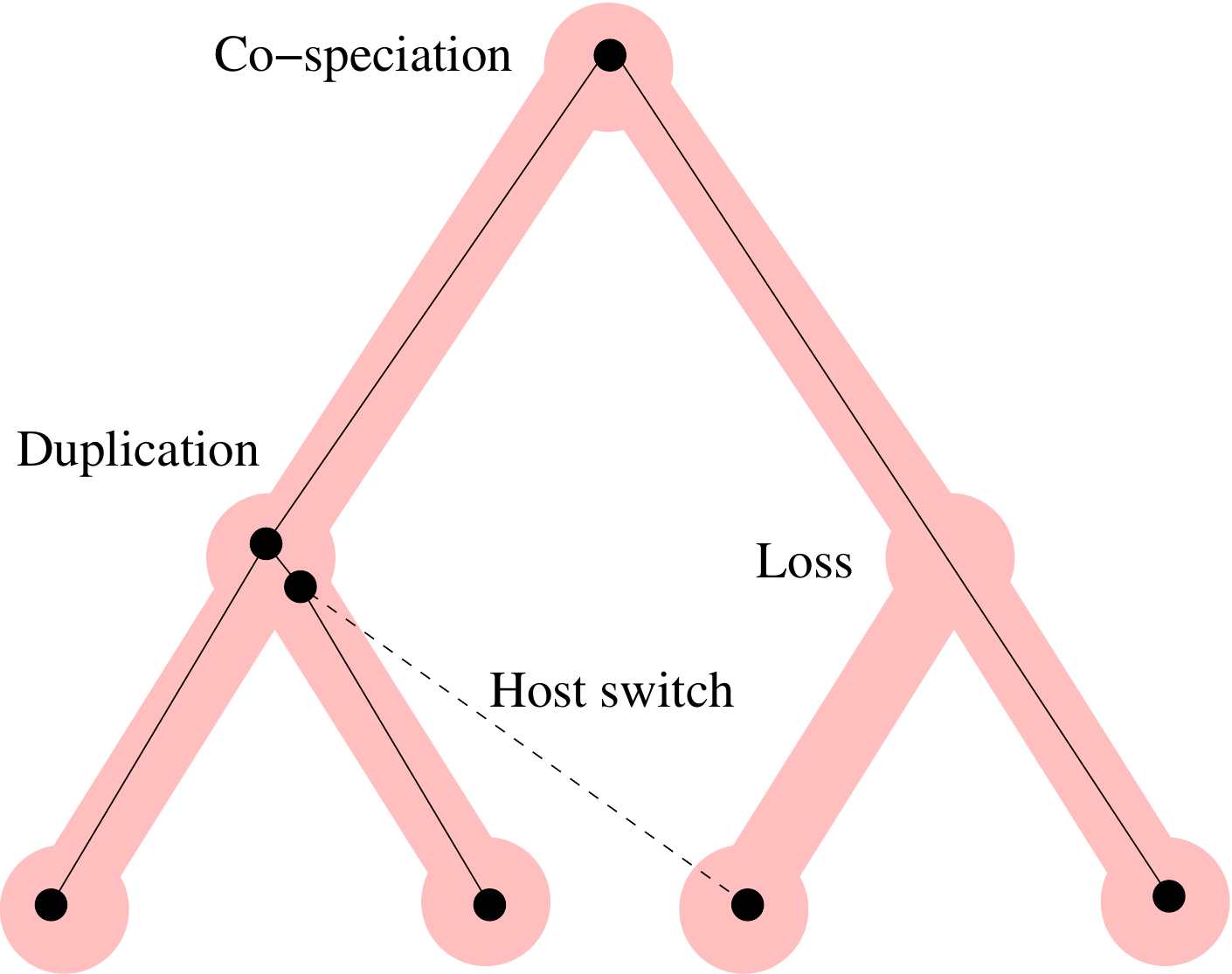} 
\caption{Co-evolutionary events in a reconciliation of a co-phylogenetic tree.}\label{fig:events}
\end{figure}

%
%

\section{Co-Phylogenetic Trees Visualization Tools}\label{ap:tools}

This section contains examples of representations of reconciliations obtained with state-of-the-art tools. Figure~\ref{fig:core-pa} shows an example of a straight-line representation obtained with CoRe-PA~\cite{CORE-ILP}, adopting the strategy of representing both $H$ and $P$ with a traditional node-link metaphor. As it can be noted, the diagram tends to be cluttered when several symbionts are associated with the same host, and there is sometimes ambiguity on what host is associated with what symbiont. 

\begin{figure}[htb]
\centering
\subfigure[]{\includegraphics[width=8cm]{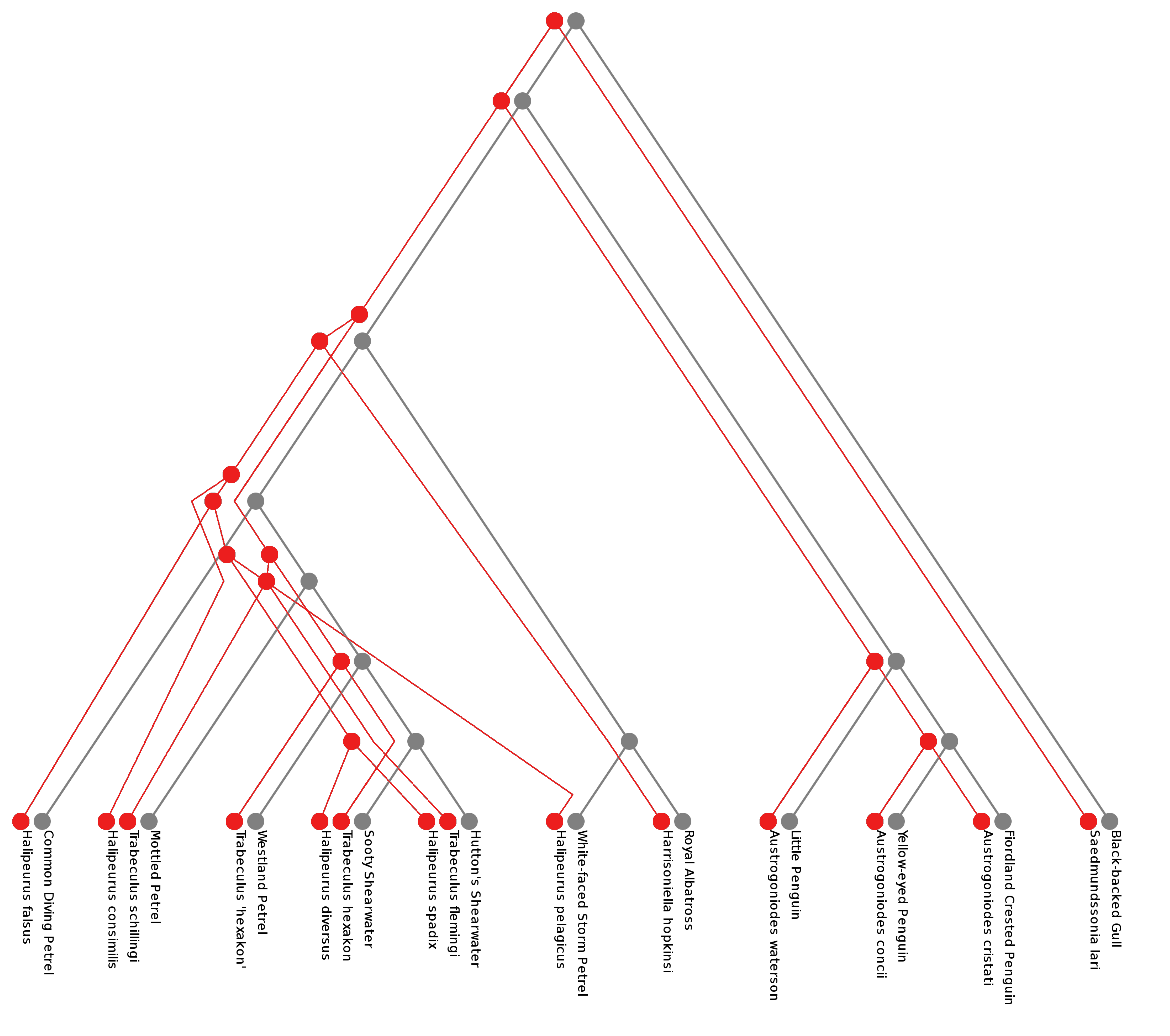}\label{fig:core-pa}}
\subfigure[]{\includegraphics[width=8cm]{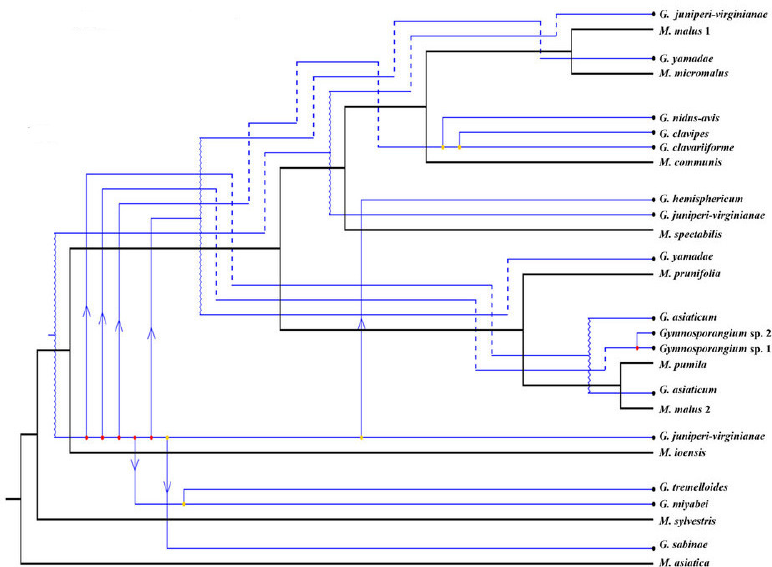}\label{fig:jane-4}}
\caption{\subref{fig:core-pa} Visualization of a reconciliation of Seabirds \& Chewing Lice co-evolution trees obtained with CoRe-PA. \subref{fig:jane-4} Cophylogenetic analysis of the Gymnosporangium-Malus pathosystem conducted with Jane 4~\cite{jane4}.}
\end{figure}

Figure~\ref{fig:jane-4} shows a representation of a reconciliation obtained with Jane 4~\cite{jane4}. The picture is taken from~\cite{zflc-ipsgs-16}. Both $H$ and $P$ are represented with the node-link metaphor but in this case an orthogonal representation has been chosen to avoid overlapping of arcs and nodes.


{\em CophyTrees}~\cite{eucalypt}, the viewer associated with the Eucalypt tool~\cite{Dal15} keeps the idea of representing the edges of $H$ as tubes and the edges of $P$ as lines (see Fig.~\ref{fig:eucalypt}). In the representation it is sometimes unclear when more than one node of $P$ is mapped on a single node of $H$.   

\begin{figure}[htb]
\centering
\subfigure[]{\includegraphics[width=8cm]{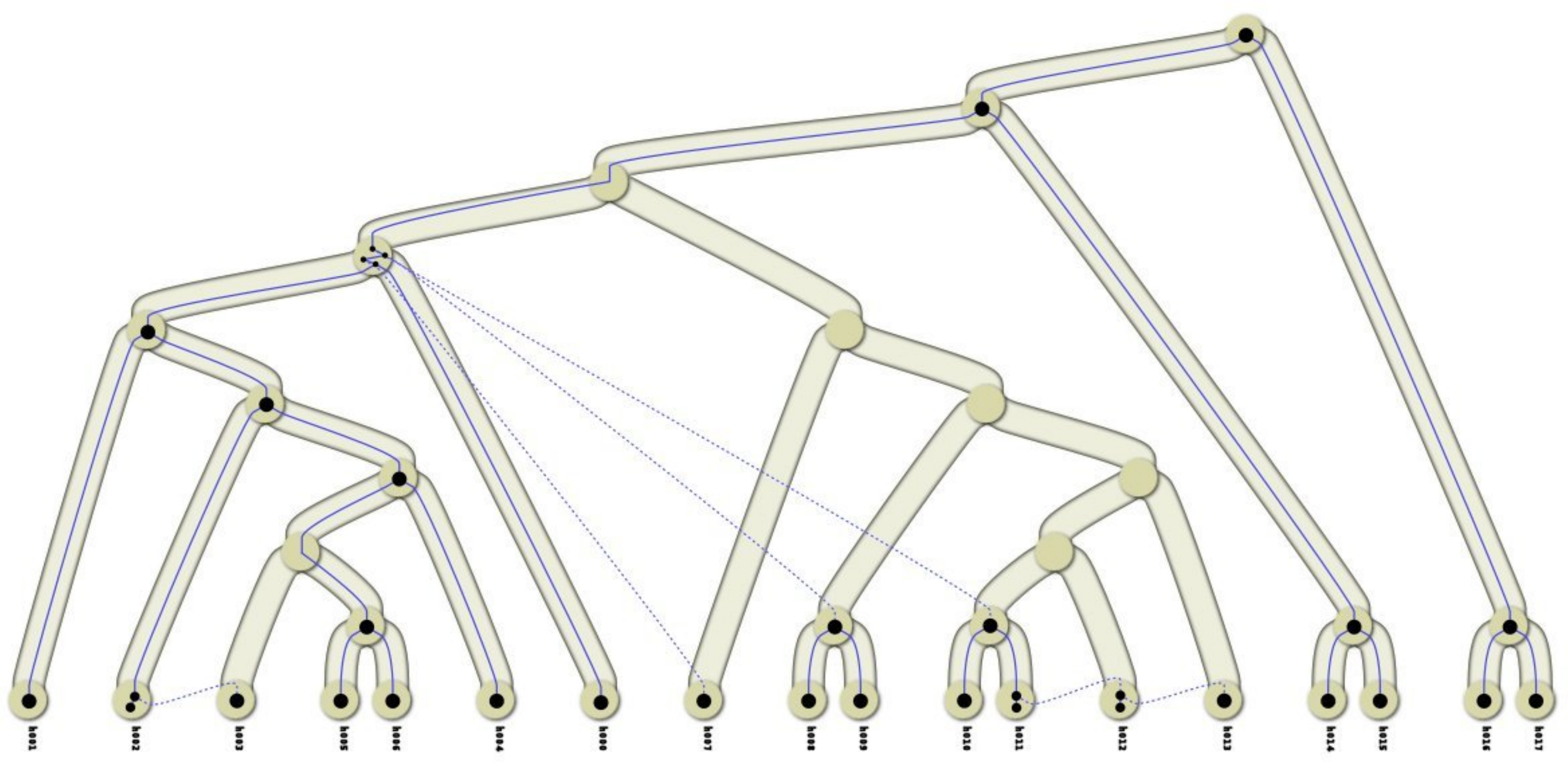}\label{fig:eucalypt}}
\subfigure[]{\includegraphics[width=8cm]{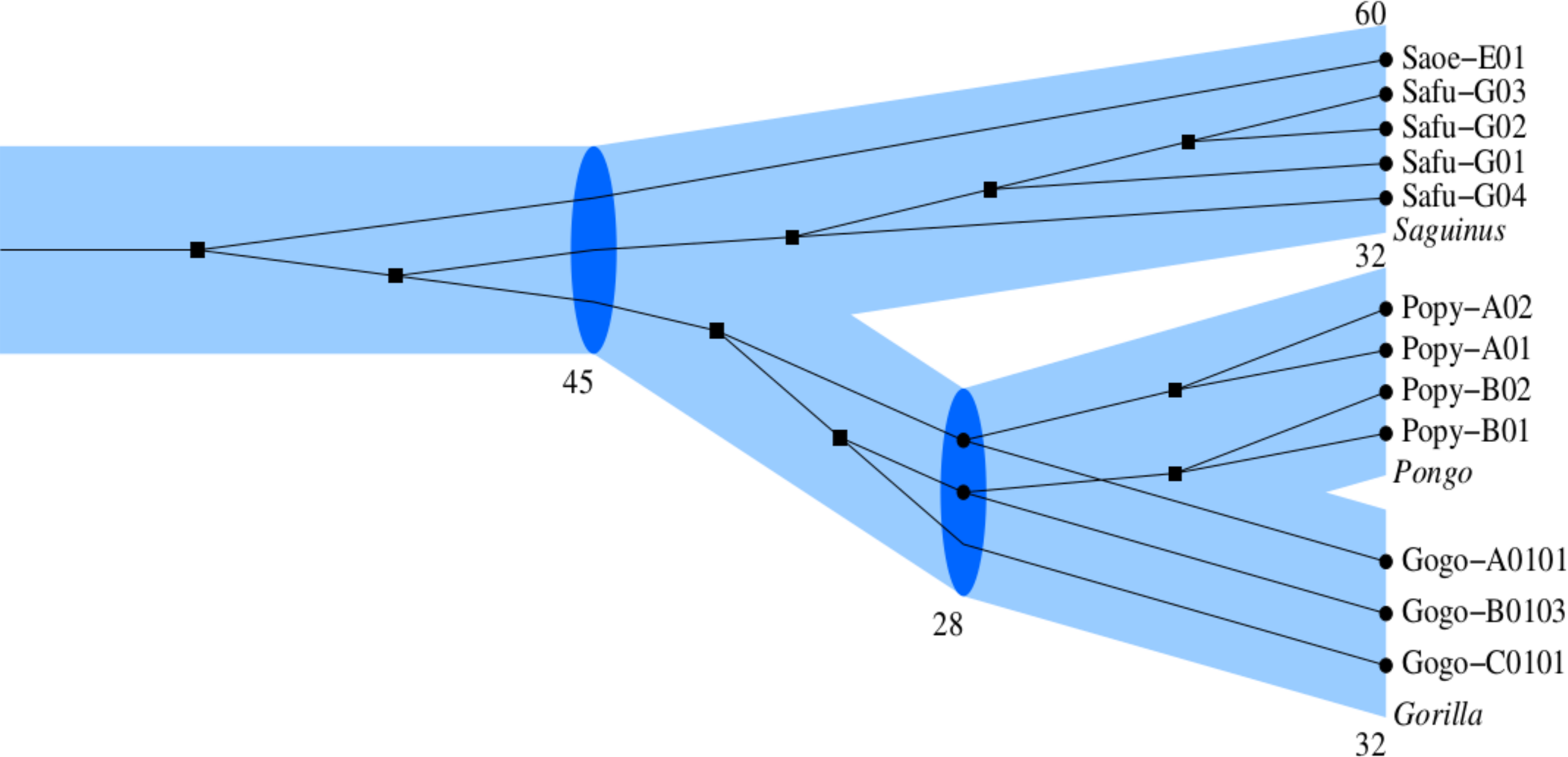}\label{fig:primetv}}
\caption{\subref{fig:eucalypt} An example of co-phylogenetic tree drawn by {\em CophyTrees}~\cite{eucalypt}. \subref{fig:primetv} Primetv-generated illustration of the reconciled tree showing the evolution of the gene family Major Histocompatibility Complex class I in Gorilla, Orangutan, and Tamarin (picture from~\cite{primetv}).}
\end{figure}


{\em Primetv} \cite{primetv} represents $H$ as a tree whose arcs are tubes and nodes are ellipses, while $P$ is represented inside the pipes of $H$ (ignoring the ellipses that represent nodes of $H$). Arcs of $P$ are straight segments. It is easy to see that, when $H$ and $P$ have a large number of nodes, it becomes impossible to clearly attribute parasites to hosts (see Fig.~\ref{fig:primetv}).


{\em SylvX} \cite{Sylvx} is a more complex reconciliation viewer which implements classical phylogenetic graphic operators (swapping, highlighting, etc.) and methods to ease interpretation and comparison of reconciliations (multiple maps, moving, shrinking sub-reconciliations). $H$ is represented as an orthogonal drawing where arcs are again represented as pipes, while $P$ is embedded inside $H$. Growing the size of the instance the drawing quickly becomes cluttered (see Fig. \ref{fig:sylvx}).

\begin{figure}[htb]
\centering
\includegraphics[width=10cm]{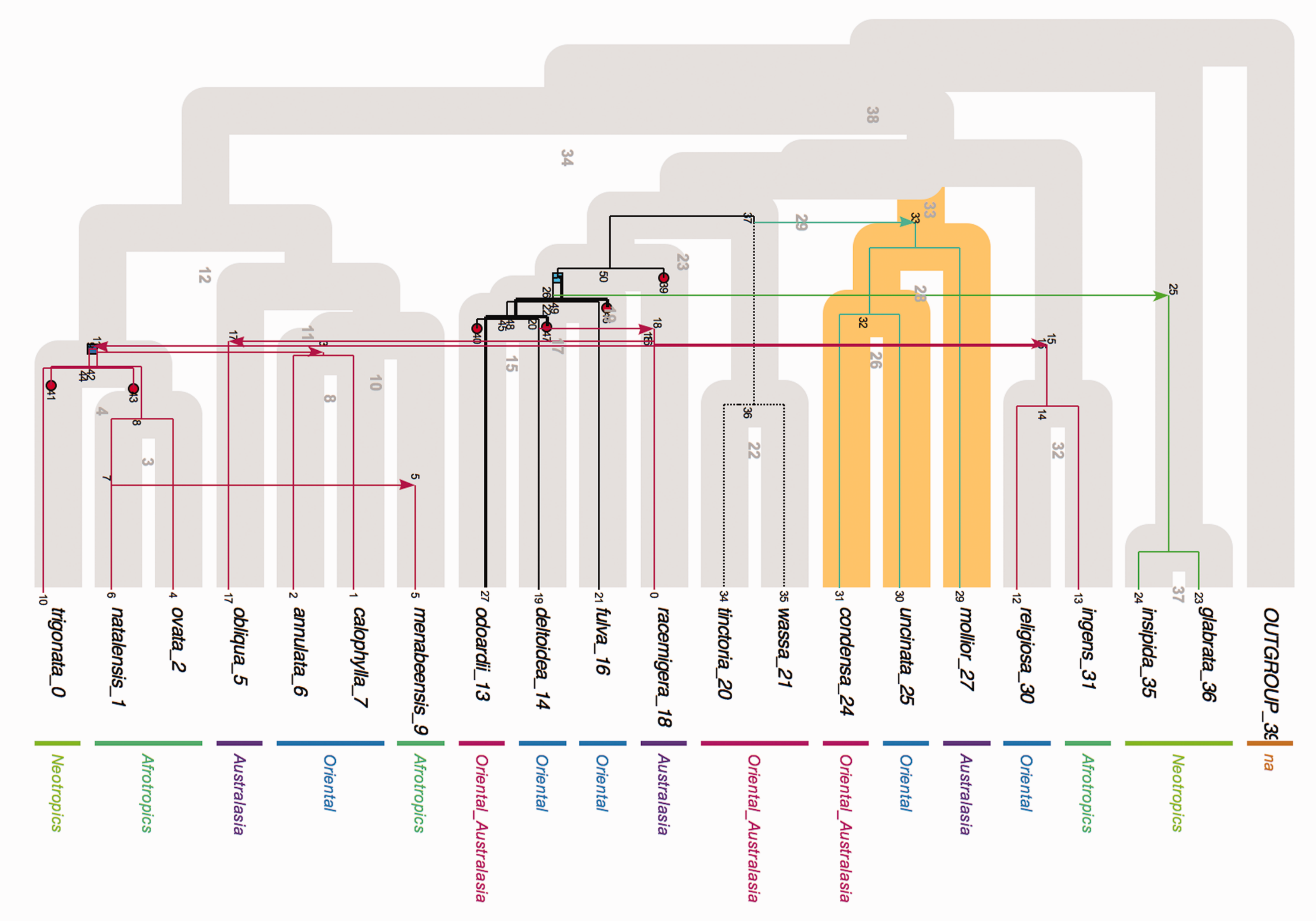}
\caption{Example of visualization of the host/parasite dataset ficus/fig wasp obtained with SylvX (picture from~\cite{Sylvx}).}\label{fig:sylvx}
\end{figure}

\begin{figure}[h]
\centering
\includegraphics[width=15cm]{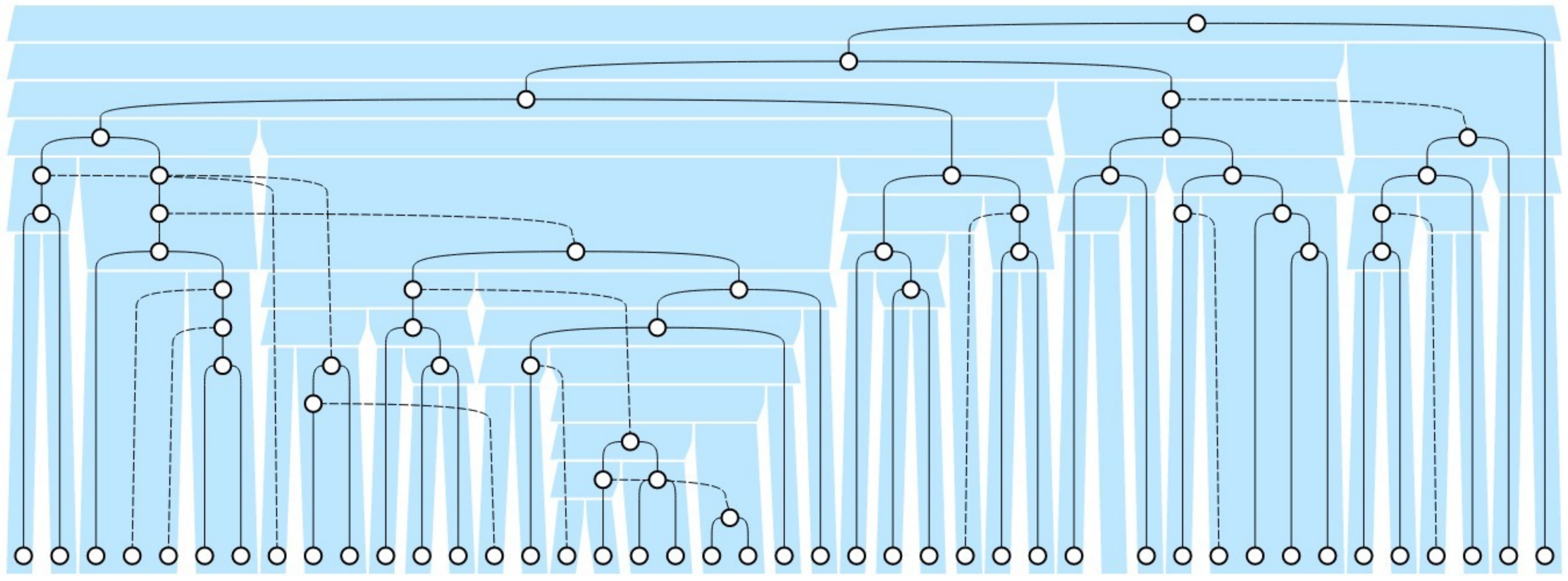}
\caption{An HP-drawing of a reconciliation of the co-phylogenetic tree of Rodents \& Hantaviruses produced by heuristic \euristicaUno.}
\label{fig:RH}
\end{figure}

%

\section{Full Proof of Theorem~\ref{th:planar}}\label{ap:theorem-planar}

\theoremplanar*
\begin{proof}
First, we prove that $(2)$ implies $(1)$.
Consider a planar drawing $\Gamma(\gamma)$ of $\gamma \in \mathcal{R}(H,P, \varphi)$ and let $l$ be the horizontal line passing through the bottom border of $\Gamma(\gamma)$. Observe that the leaves of $P$ lie above $l$.
Construct a tanglegram drawing $\Delta$ of $\langle H,P, \varphi\rangle$ as follows:
\begin{inparaenum}[(a)]%
\item Draw $H$ by placing each node $h \in \mathcal{V}(H)$ in the center of the rectangle representing $h$ in $\Gamma(\gamma)$ and by representing each arc $a \in \mathcal{A}(H)$ as a suitable curve between its incident nodes;
\item draw $P$ in $\Delta$ as a mirrored drawing with respect to $l$ of the drawing of $P$ in $\Gamma(\gamma)$;
\item connect each leaf $p \in L(P)$ to the host $\gamma(p)$ with a straight-line segment.
\end{inparaenum}
It is immediate that $\Delta$ is a tanglegram drawing of $\langle H,S,\varphi\rangle$ and that it is planar whenever $\Gamma(\gamma)$ is. 


Proving that $(1)$ implies $(2)$ is more laborious. 
Let $\Delta$ be a planar tanglegram drawing of $\langle H,P, \varphi\rangle$ (Fig.~\ref{fig:tanglegram}(a)). We construct a drawing $\Gamma(\gamma)$ of the given time-consistent reconciliation $\gamma \in \mathcal{R}(H,S, \varphi)$ as follows.
%
First, insert into the arcs of $P$ dummy nodes of degree two to represent losses, obtaining a new tree $P'$ (Fig.~\ref{fig:tanglegram}(b)) as follows. Consider an arc $(p,q) \in \mathcal{A}(P)$ that corresponds to a path $h_1, h_2, \dots, h_k$ of length $k$ from $\gamma(p)$ to $\gamma(q)$ in $H$. Then for $i=2,\dots, k-1$ repeatedly insert parasite $p_i$ between $p_{i-1}$ and $q$, where $p_1=p$, and set $\gamma(p_i)=h_i$. 

Since $\gamma$ is time-consistent, consider any ordering $\pi'$ of $\mathcal{V}(P')$ consistent with $H$. Remove from $\pi'$ the leaves of $P$ and renumber the remaining nodes obtaining a new ordering $\pi$ from $1$ to $|\mathcal{V}(P')-\mathcal{V}_L(P')|$. 
%
\begin{description}
\item[Regarding $y$-coordinates:] all the leaves of $P'$ have $y$-coordinate $1$, that is, they are placed at the bottom of the drawing, while each internal node $p \in \mathcal{V}(P')\setminus\mathcal{V}_L(P')$ has $y$-coordinate $2\pi(p)+1$ (see Fig.~\ref{fig:downward}(a)).
\item[Regarding $x$-coordinates:] each leaf $p \in \mathcal{V}_L(P)$ has $x$-coordinate $2{\sigma}(p)+1$, where ${\sigma}(p)$ is the left-to-right order of the leaves of $T_2$ in $\Delta$. The $x$-coordinate of an internal node $p$ of $P$ is copied from one of its children $p_1$ or $p_2$, arbitrarily chosen if none of them is connected by a host-switch, the one (always present) that is not connected by a host-switch otherwise. 
\end{description}

Let $h$ be a node of $\mathcal{V}(H)$; 
rectangle $R_h$, representing $h$ in $\Gamma$, has the minimum width that is sufficient to span all the parasites contained in the subtree $T_h(H)$ of $H$ rooted at $h$ (hence, it spans the interval $[x_{min}-1,x_{max}+1]$, where $x_{min}$ and $x_{max}$ are the minimum and maximum $x$-coordinates of a parasite contained in $T_h(H)$, respectively). The top border of $R_h$ has $y$-coordinate $y_{\textsc{min}}-1$, where $y_{\textsc{min}}$ is the minimum $y$-coordinate of a parasite node contained in the parent of $h$. The bottom border of $R_h$ is $y_{min}-1$, where $y_{min}$ is the minimum $y$-coordinate of a parasite node contained in $h$ (see Fig.~\ref{fig:downward}(b)).
%

Now, we show that the obtained representation $\Gamma(\gamma)$ is planar and downward. 
First observe that two rectangles $R_i$ and $R_j$ cannot overlap. In fact, by construction $R_i$ and $R_j$ can overlap with their $x$-coordinates only if the corresponding hosts $h_i$ and $h_j$ have a descendant in common, which is ruled out if $h_i$ and $h_j$ are not comparable. If $h_i$ and $h_j$ are comparable, then $R_i$ and $R_j$ surely overlap with their $x$-coordinates but, by construction, they cannot overlap with their $y$-coordinates. 

Since the $y$-coordinate of the parasites was assigned based on a consistent ordering $\pi$, all arcs of $P$ are drawn downward. 
Finally, the representation of $P$ is planar as the embedding of $P$ mirrors the embedding of $T_2$ and the drawing of $P$ is downward.
\qed
\end{proof}

\begin{figure}[h]
\centering
\begin{tabular}{c @{\hspace{2em}} c @{\hspace{2em}} c}
\includegraphics[width=4.5cm]{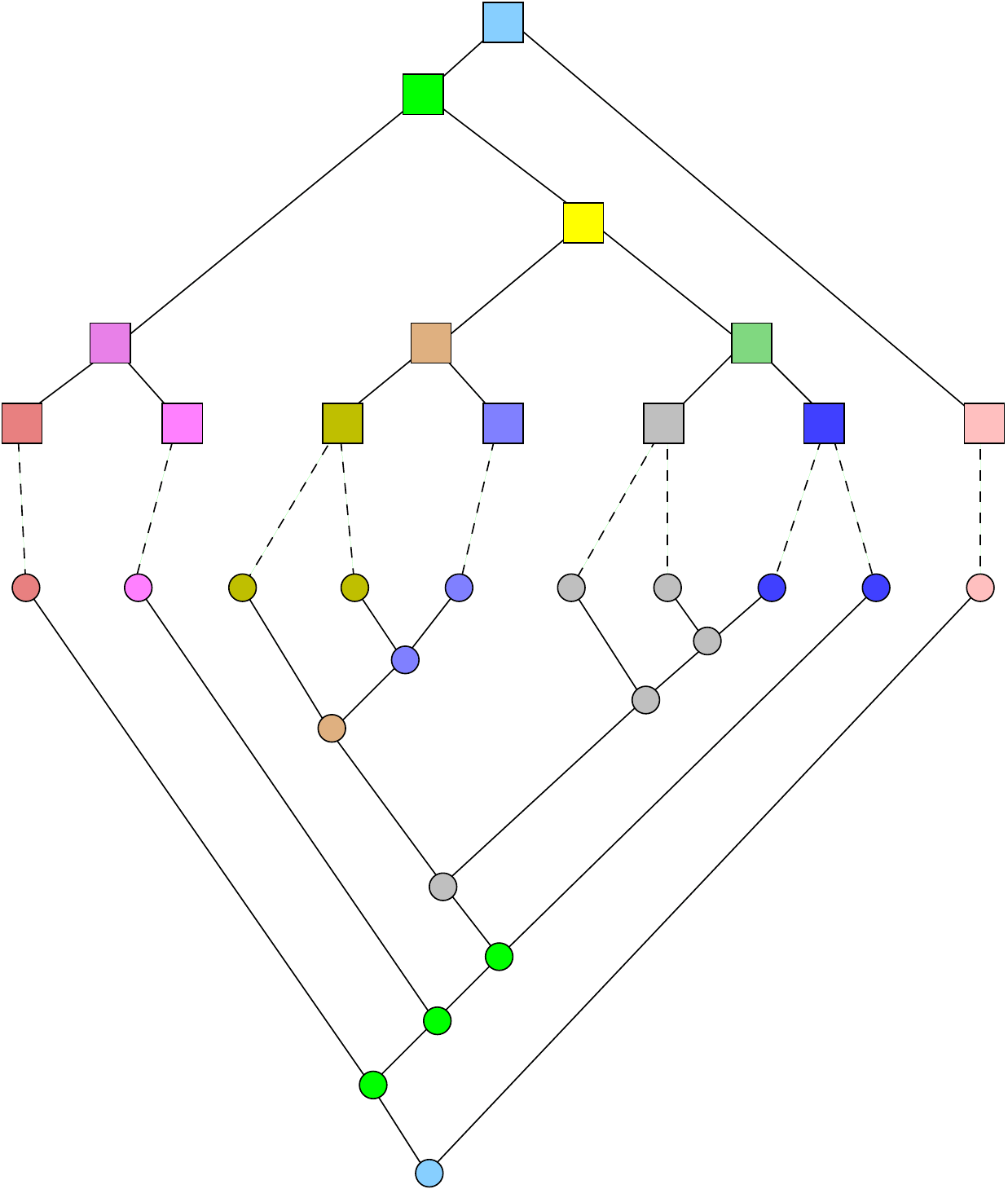}~~~~ &
~~~~\includegraphics[width=4.5cm]{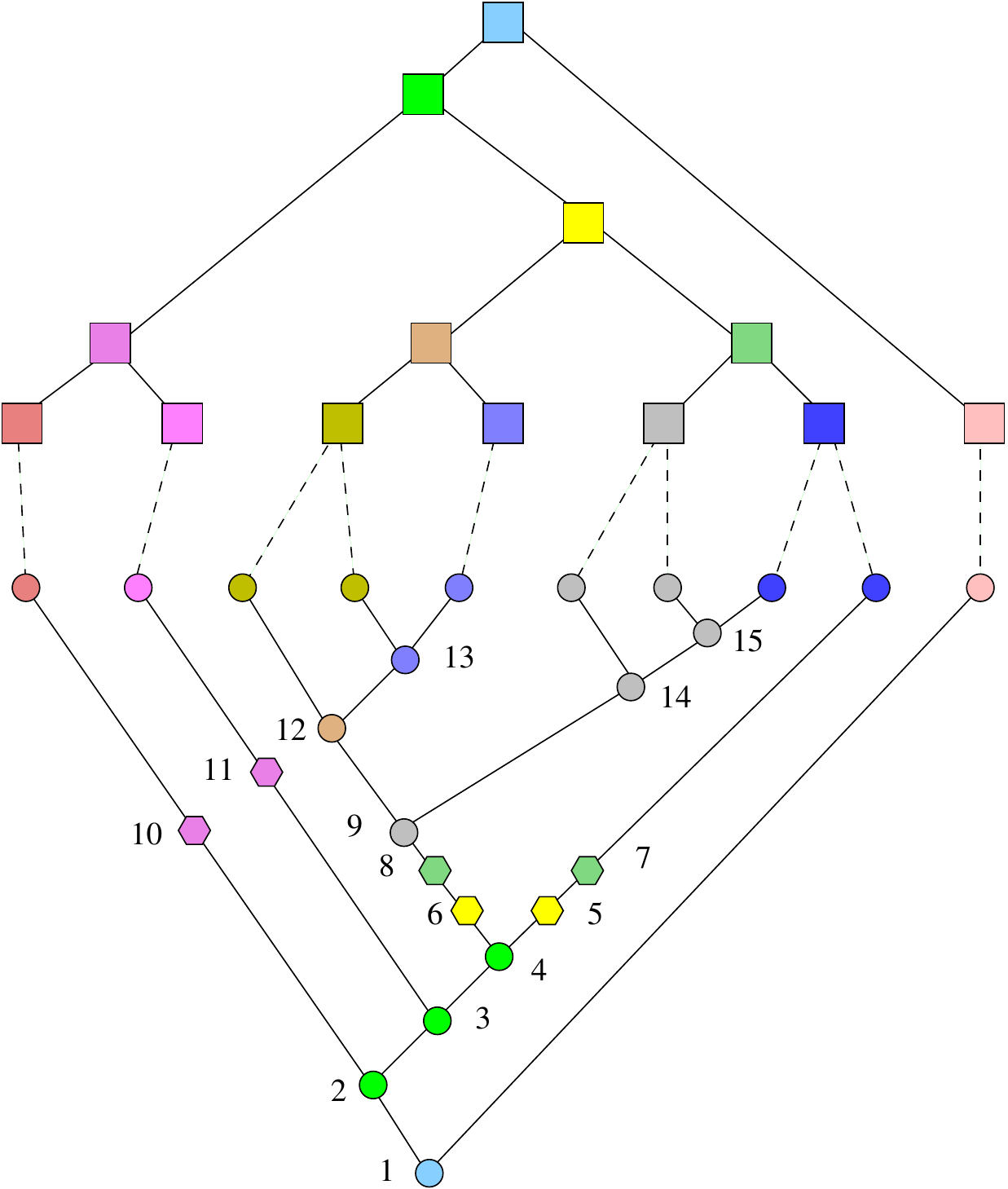} & \\
(a) & (b) 
\end{tabular}
\caption{(a) A planar tanglegram drawing of an instance $\langle H,P,\varphi\rangle$ of co-phylogenetic tree (Encyrtidae \& Coccidae). The colors of the nodes represent a reconciliation $\gamma$.
(b) The tree $P$ has been enriched with degree-two nodes (the hexagonal-shaped ones) accounting for losses.}\label{fig:tanglegram}
\end{figure}

\begin{figure}[h]
\centering
\begin{tabular}{c @{\hspace{2em}} c @{\hspace{2em}} c}
\includegraphics[width=4.5cm]{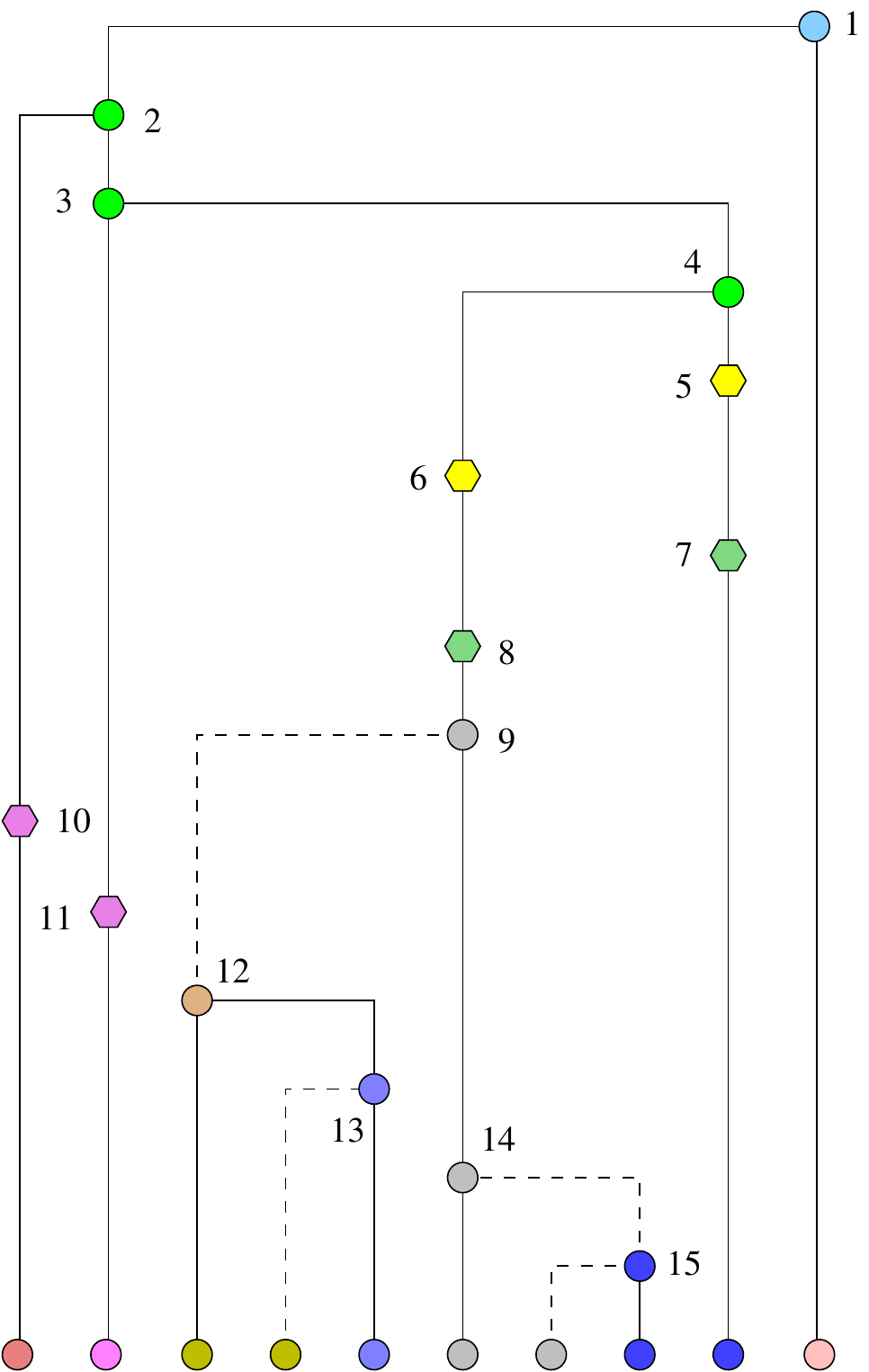} &
\includegraphics[width=4.5cm]{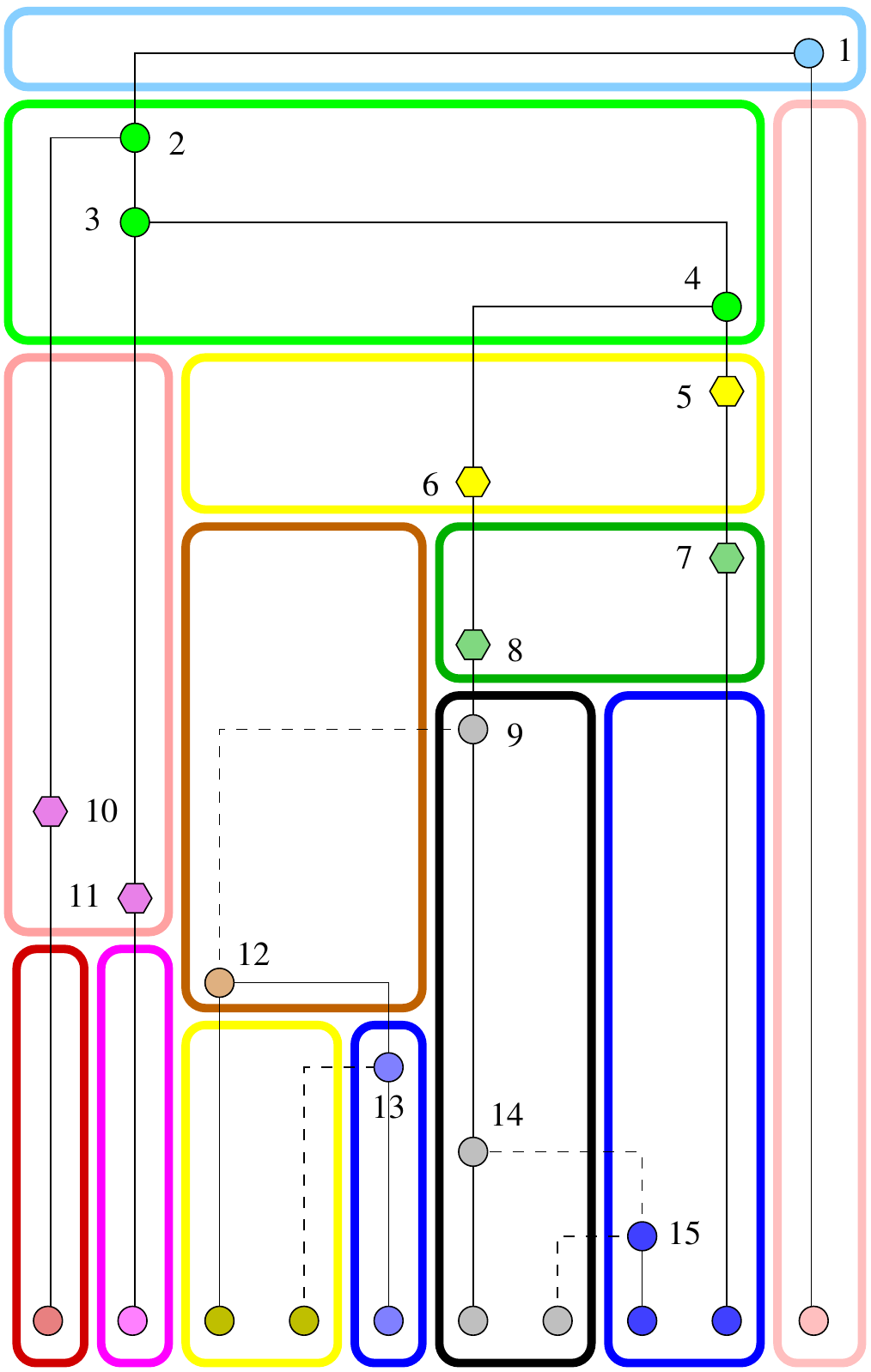} & \\
(a) & (b) 
\end{tabular}
\caption{(a) The drawing of tree $P$ obtained as described in the proof of Theorem~\ref{th:planar}. (b) The drawing of $\gamma$ obtained as described in the proof of Theorem~\ref{th:planar}.}\label{fig:downward}
\end{figure}

\section{Full Proof of Theorem~\ref{th:complexity}}\label{ap:theorem-complexity}

\theoremcomplexity*

\begin{proof}
Problem \textsc{RL} is in NP since we can non-deterministically explore all possible HP-drawings of $\gamma$ inside an area that has maximum width $|\mathcal{V}_L(P)| + |\mathcal{V}_L(H)|$ and maximum height $|\mathcal{V}(P)| + |\mathcal{V}(H)|$. A drawing is defined by assigning $x$- and $y$-coordinates to all nodes of $P$ and the coordinates of the top-left and bottom-right vertices of each rectangle associated to a node of $H$. Once coordinates have been non-deterministically assigned it remains to check if the obtained drawing is an HP-drawing of $\gamma$ and if the number of crossings is at most~$k$. Both these tasks can be performed in polynomial time.

Let $I_{\textsc{TTCM}} = \langle T_1, T_2, \psi, k \rangle$ be an instance of \textsc{TTCM}, where $T_1$ and $T_2$ are complete binary trees of height $h$, $\psi$ is a one-to-one mapping between $\mathcal{V}_L(T_1)$ and $\mathcal{V}_L(T_2)$, and $k$ is a constant. We show how to build an instance $I_{\textsc{RL}} =\langle \gamma \in \mathcal{R}(H, S,\varphi), k' \rangle$ of \textsc{RL}.

First we introduce a gadget, called `sewing tree', that will help in the definition of our instance. A \emph{sewing tree} is a subtree of the parasite tree whose nodes are alternatively assigned to two host leaves $h_1$ and $h_2$ in the following way. A single node $p_0$ with $\gamma(p_0)=h_2$ is a sewing tree $S_0$ of size $0$ and root $p_0$. Let $S_m$ be a sewing tree of size $m$ and root $p_m$ such that $\gamma(p_m)=h_2$ ($\gamma(p_m)=h_1$, respectively). In order to obtain $S_{m+1}$ we add a node $p_{m+1}$ with $\gamma(p_{m+1})=h_1$ ($\gamma(p_{m+1})=h_2$, respectively) and two children, $p_m$ and $p'_m$, with $\gamma(p'_m)=h_1$ ($\gamma(p'_m)=h_2$, respectively). See Fig.~\ref{fig:sewing} for examples of sewing trees. 
Intuitively, a sewing tree has the purpose of making costly the insertion of a host $h_3$ between hosts $h_1$ and $h_2$, whenever $h_3$ contains several vertical arcs of $P$ towards leaves of $h_3$.

Host tree $H$ has a root $r(H)$ with two children $h_1$ and $h_2$. Host $h_1$ has two children $h_5$ and $h_6$, which is a leaf. Host $h_2$ has a leaf child $h_3$ and a child $h_4$. In turn, $h_4$ has a leaf child $h_7$ and a child $h_8$. Finally, $h_5$ and $h_8$ are the roots of two binary trees of depth $h$. 
Intuitively, $T_1$ corresponds in $H$ to the subtree rooted at $h_5$ (filled green in Fig.~\ref{fig:reduction}) while $T_2$ corresponds in $H$ to the subtree rooted at $h_8$ (filled pink in Fig.~\ref{fig:reduction}). Hence, the leaves $l_{1,1}, l_{1,2}, \dots, l_{1,2^h}$ of $T_1$ are associated to the leaves $h_{1,1}, h_{1,2}, \dots, h_{1,2^h}$ of the subtree rooted at $h_5$, and, similarly, the leaves $l_{2,1}, l_{2,2}, \dots, l_{2,2^h}$ of $T_2$ are associated to the leaves $h_{2,1}, h_{2,2}, \dots, h_{2,2^h}$ of the subtree rooted at $h_8$. 

Let $k'=k+2^h\cdot(2^h-1)$.
The root $r(P)$ of $P$ has $\gamma(r(P))=r(H)$. One child of $r(P)$ is the root of a sewing tree of size $k'+1$ between $h_3$ and $h_6$. The other child $p_1$, with $\gamma(p_1)=r(H)$, has one child that is the root of a sewing tree of size $k'+1$ between $h_3$ and $h_7$, and one child $p_2$, with $\gamma(p_2)=h_2$. Parasite $p_2$ is the root of a complete binary tree $T_h$ of height $h$, whose internal nodes are assigned to $h_2$, while the leaves are assigned to $h_3$.  
Each one of the $2^h$ leaves of $T_h$ is associated with a tangle of the instance $I_{\textsc{TTCM}}$. Namely, suppose $e=(l_{1,i},l_{2,j})$ is a tangle in the instance $I_{\textsc{TTCM}}$. Then, an arbitrary leaf $p_e$ of $T_h$ is associated with $e$. Node $p_e$ has children $p_{1,i}$, with $\gamma(p_{1,i})=h_{1,i}$, and $p'_e$, with $\gamma(p'_e)=h_3$. Node $p'_e$, in turn, has children $p_{2,j}$, with $\gamma(p_{2,j})=h_{2,j}$, and $p''_e$, with $\gamma(p''_e)=h_3$.

Now we show that instance $I_{\textsc{TTCM}}$ is a yes instance of {\sc TTCM} if and only if the instance $I_{\textsc{RL}}$ is a yes instance of {\sc RL}.

Suppose $\langle T_1, T_2, \psi, k \rangle$ admits a tanglegram drawing with at most $k$ crossings. We show how to build a drawing of $I_{\textsc{RL}}$ with at most $k'$ crossings. We draw $r(H), h_1, h_2, \dots, h_8$ according to Fig.~\ref{fig:reduction} and we embed the subtrees of $H$ rooted at $h_5$ and $h_8$ as the tree $T_1$ and $T_2$, respectively. 
Now consider two tangles $e_1$ and $e_2$ that do not cross in the tanglegram drawing of the co-phylogenetic tree $\langle T_1, T_2, \psi\rangle$. In the instance $I_{\textsc{RL}}$ they correspond to two subtrees rooted at two leaves $p_{e_1}$ and $p_{e_2}$ of $T_h$ that can be drawn in the HP-drawing  introducing only two crossings (refer to Fig.~\ref{fig:reduction-non-crossing}).
Conversely, if two tangles cross in the tanglegram drawing of $\langle T_1, T_2, \psi\rangle$, the corresponding subtrees in the instance $I_{\textsc{RL}}$ introduce necessarily three crossings.  
Since the number of tangles is $2^h$, they can be paired in $\frac{2^h\cdot(2^h-1)}{2}$ ways.
Hence, the total number of crossings is $k+2\frac{2^h\cdot(2^h-1)}{2}=k+2^h\cdot(2^h-1)=k'$ 

\begin{figure}[h]
\centering
\includegraphics[width=12cm]{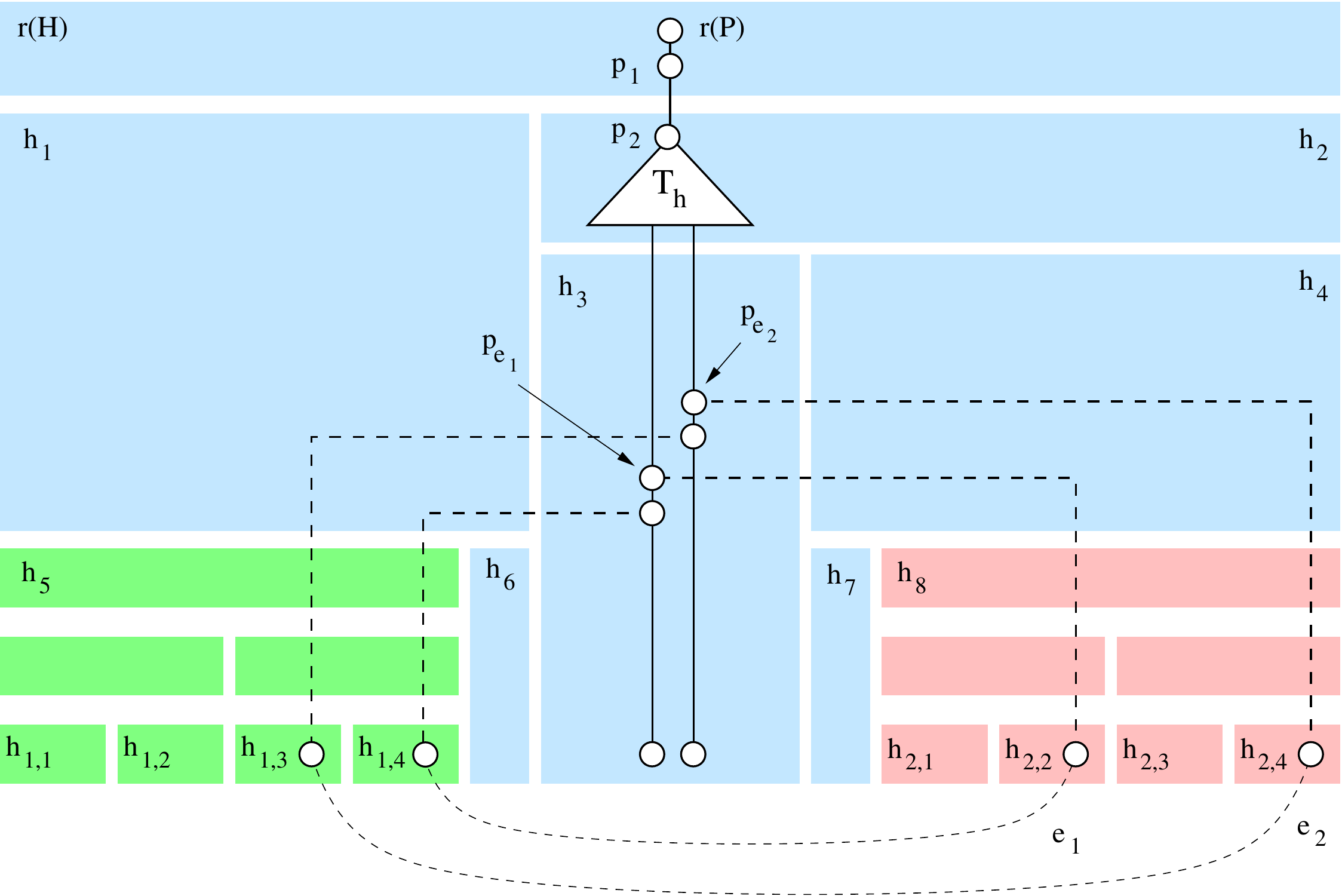}
\caption{ 
When two tangles $e_1$ and $e_2$ do not cross in the tanglegram drawing of $I_{\textsc{TTCM}}$  we have two crossings in the HP-drawing of $I_{\textsc{RL}}$.}
\label{fig:reduction-non-crossing}
\end{figure}

Conversely, suppose the instance $I_{\textsc{RL}}$ admits a drawing with at most $k'$ crossings. We show that the original instance $I_{\textsc{TTCM}}$ admits a tanglegram drawing with at most $k$ crossings. 
An immediate consequence of the sewing tree of size $k'+1$ between $h_6$ and $h_3$ is that any HP-drawing of the reconciliation $\gamma$ such that $h_6$ and $h_3$ are not adjacent has more than $k'$ crossings. Analogously, any HP-drawing of $\gamma$ where $h_7$ is not adjacent to $h_3$ has more than $k'$ crossings.
If follows that in any HP-drawing of $\gamma$ with at most $k'$ crossings the embedding of hosts $r(H), h_1, h_2, \dots, h_8$ is exactly the one represented in Fig.~\ref{fig:reduction}, up to a horizontal flip. 
The proof is concluded by showing that a tanglegram drawing of $\langle T_1, T_2, \psi\rangle$ with $k$ crossings can be constructed by embedding $T_1$ as the subtree of $H$ rooted at $h_5$ and by embedding $T_2$ as the subtree rooted at $h_8$.   
\qed
\end{proof}

\begin{figure}[h]
\centering
\includegraphics[width=7cm]{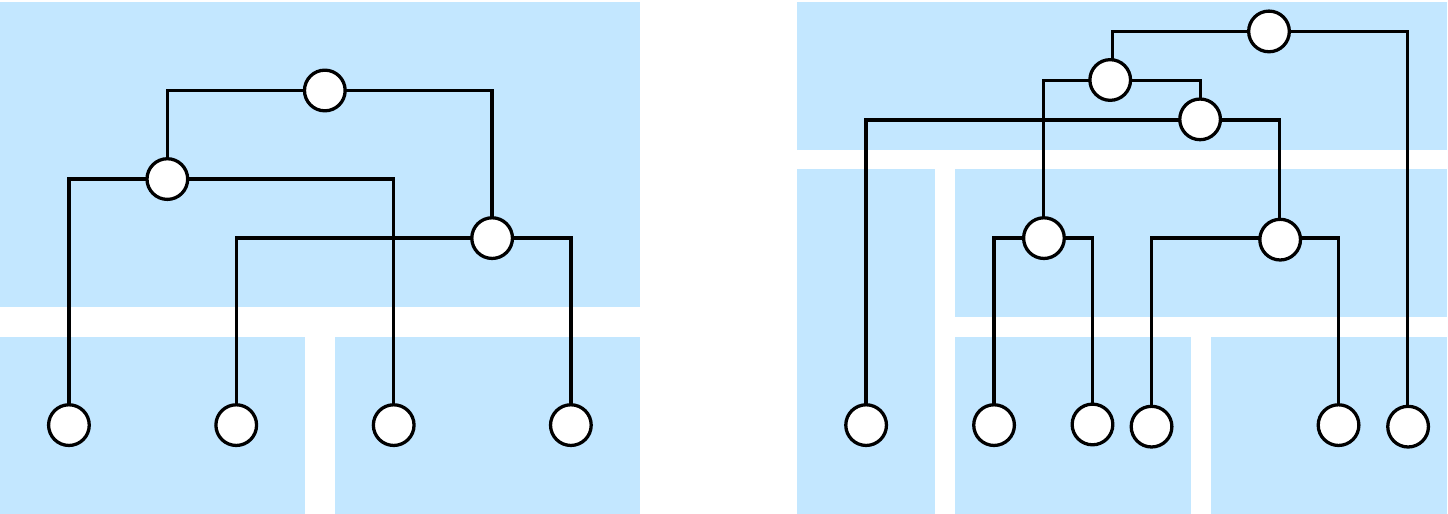}
\caption{Configurations with unavoidable crossings.}\label{fig:crossings}
\end{figure}

\remove{

\section{Algorithms}\label{ap:algorithms}

\subsection{Algorithm \algoritmoPlanare}\label{ap:algoritmo-planare}
Example algorithm:

\begin{algorithm}[t]
\SetKwData{G}{G}
\SetKwData{GG}{G(V,E)}
\SetKwData{V}{V} 
\SetKwData{info}{$\mathbf{A}$} 
\SetKwData{thresh}{$\tau$} 
\SetKwData{lsh}{$S$} 
\SetKwData{bbls}{$T$} 
\SetKwData{comms}{C} 
\SetKwData{com}{$C_{s}$} 
\SetKwData{node}{s} 
\SetKwData{cnode}{v} 
\SetKwData{buck}{B} 
\SetKwData{cand}{y} 
\SetKwData{addedge}{\textsc{AddEdgeFromTo}} 
\SetKwData{addtree}{\textsc{AddVerticesAndArcs}} 
\SetKwData{planarize}{\textsc{Planarize}} 
\SetKwData{countnodes}{\textsc{CountNodes}} 
\SetKwData{computehostwidth}{\textsc{ComputeHostWidth}} 
\SetKwData{computehostxcoords}{\textsc{ComputeHostXCoords}} 
\SetKwData{drawparasiteleaves}{\textsc{DrawParasiteLeaves}} 
\SetKwData{computeparasitexcoords}{\textsc{ComputeParasiteXCoords}}
\SetKwFunction{dist}{$d_{ST}$}
\DontPrintSemicolon
\SetKwInOut{Input}{Input}
\SetKwInOut{Output}{Output}
\Input{$\gamma$, a reconciliation in $\mathcal{R}(H, P, \varphi)$}
\Output{$\Gamma(\gamma)$, a planar downward drawing of reconciliation $\gamma$ or the information that none exists}
\BlankLine
$G \leftarrow$ empty graph\;
$G \leftarrow \addtree(G,H)$\;
$G \leftarrow \addtree(G,P)$\;
\ForEach{$p \in \mathcal{V}_L(P)$}{
   $G \leftarrow \addedge(G,p,\varphi(p))$\;
}
$G \leftarrow \addedge(G,r(P),r(H))$\;
$\mathcal{E} = \planarize(G)$\;
\If{$\countnodes(G) \neq \countnodes(\mathcal{E})$}{
    \Return false \tcp*{no planar drawing for $\gamma$}
}
$W_H, X_H, Y_H \leftarrow$ arrays of $|\mathcal{V}(H)|$ integers\;
$X_P, Y_P \leftarrow$ arrays of $|\mathcal{V}(P)|$ integers\;
\ForEach{$h \in \mathcal{V}(H)$}{
    $W_H \leftarrow 0$ \tcp*{initialize to zero}
}
\ForEach{$p \in \mathcal{V}_L(P)$}{
    $W_H[\varphi(p)] \leftarrow W_H[\varphi(p)]+1$ \tcp*{set $W_H$ of leaves}
}
$\computehostwidth(H,W_H,r(H))$ \tcp*{Algorithm~\ref{alg:compute-host-width}}  
$X_H[r(H)] = 0$\;
$\computehostxcoords(H,W_H,X_H,r(H))$ \tcp*{Algorithm~\ref{alg:compute-host-x-coords}}

$\drawparasiteleaves(\mathcal{E},P,X_H,X_P,Y_P)$ \tcp*{Algorithm~\ref{alg:draw-parasite-leaves}}
$\computeparasitexcoords(P,X_P,r(P))$ \tcp*{Algorithm~\ref{alg:compute-parasite-x-coords}}


\Return boh;
\BlankLine
\caption{\textbf{\algoritmoPlanare} algorithm.}\label{alg:algoritmo-planare}
\end{algorithm}

{ \tiny

\begin{verbatim}


per l'altezza host:
    profonditàMassima[host] = altezza di un host (inizializzato a zero)
    profonditàNellHost[parassita] = altezza del parassita nel sottoalbero contenuto nell'host
	lanciamo RICORRI(radice(P))

	RICORRI(p)
		if (gamma(p) == gamma(parent(p))) 
			profonditàNellHost[p] = profonditàNellHost[parent(p)] + 1 (il nodo corrente è più profondo) 
			RICORRI(figlio sinistro(p))
			RICORRI(figlio destro(p))
	    else (il figlio va in un altro host h) 
			if(profonditàMassima[h] è zero)
				imposto come profondità quella che mi viene passata
			else 
				verifico se il valore è maggiore all'attuale profondità e se si aggiorno il valore di profonditàMassima, poi imposto il valore di profondità pari a 1, imposto il valore in <parassita,profonditàNellHost>, rilancio la funzione sui figli
		Per finire scorriamo l'host e se non troviamo nessun valore di profondità massima nella map  prima calcolata lo impostiamo a 1



Rendere downward:
	definisco:	
	- parassita root è sistemato di default
	- altri parassiti sono sistemato se sono disegnati sotto il parassita padre
	- parassita è sistemabile se il nodo padre è sistemato e se il padre dell'host in cui è mappato il parassita è sistemato
	- host è sistemato se il padre è sistemato e se i nodi parassiti all'interno sono sistemati,la root è sistemata di default.


	ora per la profondità dei nodi parassiti che non arrivano da host-swith questi sono già sistemati perche il padre o si trova a profondità minore oppure si trova nell'host antenato dell'host in cui siamo mappati.
	- quindi come primo passo etichetto tutti questi nodi parassiti come sistemati.
	- ora devo sistemare i nodi che arrivano da un evento di host-switch:
		
		- metto tutti questi nodi in una lista chiamata nodiDaSistemare e li ordino per profondità nell'abero dei parassiti cosi da prendere sempre quello con profondità minore.

		- partendo dalla radice dell'host che è sistemata di default "propago il sistemato":
			- sapendo che la radice è sistemata vedo se il figlio desto e sinistro possono essere etichettati come sistemati ovvero se hanno tutti i parassiti all'interno sistemati e procedo cosi fino a quando non posso più "propagare"

		- Ora quindi scorriamo la lista dei nodi parassita da sistemare e vediamo se il nodo è sistemabile:
			- si : allora calcolo la profondità complessiva del nodo da sistemare e quella del padre:
					per calcolarla faccio semplicemente:
						sommatoria di tutti i valori di profonditàMassima degli host antenati (<host, profonditàMassima> è stata calcolata precedentemente ) + profonditàNellHost del parassita
					una volta avute le due profondità calcolo la differenza:
						il padre è più in basso o a uguale altezza del nodo sistemabile?
							no : ok, tutto rimane uguale e etichetto il nodo sistemabile come sistemato.
							si : - aumento la profonditàNellHost del parassita sitemabile.
								 - tale aumento lo porto anche sui discendenti del parassita contenuti nell'host
								 - controllo se sforo il valore di profonditàMassima dell'host in cui è contenuto il parassita e se si aggiorno alla profonditàMassima raggiunta il valore nella map<host, profonditàMassima> 
								 - etichetto il nodo sistemabile come sistemato.
							tolgo il nodo ora sistemato dalla lista dei nodi da sistemare.
			- no : ho un nodo successivo?:
					- si : ok passo a controllare il successivo
					- no : prendo il primo nodo che ho nella lista lo etichetto come arco_di_ciclo e come sistemato e lo tolgo dalla lista dei nodi da sistemare.

			una volta etichettato il nodo parassita controllo se il nodo host in cui è contenuto è sistemato ovvero se il padre è sistemato e se tutti i nodi parassiti all'interno sono sistemati.
				- si: lo etichetto come sistemato e "propago il sistemato"
				- no: non faccio nulla

		- una volta che tutti i nodi sono sistemati ho con la map<host, profonditàMassima> quanti livelli di parassiti sono contenuti nell'host(cosi da poter determinare l'altezza della box)

		- mentre grazie alla map<parassita,ProfonditàNellHost> so a quale livello dovranno essere posizionati i parassiti per avere un disegno down

Ora per trasformare questi livelli in un valore in pixel 
	- prendiamo una costante c
	- disegniamo l'host considerando che ogni livello sia altro c
	- calcoliamo la yCoordinata dei vari box partendo dalla radice:
		- alla radice passiamo come valore yBoxParent = 0
		- la funzione vede se la box dove siamo è una foglia:
			si: imposta la yCoordinata = yBoxParent e si ferma.
			no: imposta la yCoordinata = yBoxParent e richiama la funzione
				- sul figlio destro con yBoxParent =+ c * profonditàMassima dell'host figlio destro(preso dalla mappa creata precedentemente)
				- sul figlio sinistro con yBoxParent =+ c * profonditàMassima dell'host figlio sinistro(preso dalla mappa creata precedentemente)
	- disegnato l'albero degli host cerchiamo il valore più alto di yCoordinata+altezza che chiamiamo altezzaTemporanea
	- calcoliamo il fattore di scala per portare il disegno alla altezza desiderata.
	- questo fattore di scala lo moltiplichiamo con la costante c per avere l'altezza del livello che chiamiamo altezzaLivello

Ora per determinare la yCoordinata dei parassiti
	- per ogni parassita:
		- prendiamo la yCoordinata dell'host in cui è contenuto e ci sommiamo altezzaLivello*ProfonditàNellHost (che si trova come valore nella map prima calcolata)

\end{verbatim}
}


\begin{algorithm}[t]
\SetKwData{addedge}{\textsc{AddEdgeFromTo}} 
\SetKwData{addtree}{\textsc{AddVerticesAndArcs}} 
\SetKwData{planarize}{\textsc{Planarize}} 
\SetKwData{countnodes}{\textsc{CountNodes}} 
\SetKwData{computehostwidth}{\textsc{ComputeHostWidth}} 
\SetKwData{computehostxcoords}{\textsc{ComputeHostXCoords}} 
\SetKwData{leftchild}{\textsc{LeftChild}} 
\SetKwData{rightchild}{\textsc{RightChild}} 
\DontPrintSemicolon
\SetKwInOut{Input}{Input}
\SetKwInOut{Output}{Output}
\SetKwInOut{Note}{Note}
\Input{$H$, the host tree; $W_H$, an array specifying for each $v \in \mathcal{V}(H)$ its width $W_H[v]$; $h$ the current host}
\Output{Updates $W_H$ for internal nodes of the subtree of $H$ rooted at~$h$}
\BlankLine
\tcc{perform a postorder traversal of $H$} 
\If{$h \notin \mathcal{V}_L(H)$}{
	$\computehostwidth(H,W_H,\leftchild(h))$\; 
	$\computehostwidth(H,W_H,\rightchild(h))$\; 
    $W_H[h] \leftarrow W_H[\leftchild(h)]+W_H[\rightchild(h)]$ 
}
\Return\; 
\BlankLine
\caption{\textbf{\sc ComputeHostWidth} procedure.}
\label{alg:compute-host-width}
\end{algorithm}


\begin{algorithm}[t]
\SetKwData{addedge}{\textsc{AddEdgeFromTo}} 
\SetKwData{addtree}{\textsc{AddVerticesAndArcs}} 
\SetKwData{planarize}{\textsc{Planarize}} 
\SetKwData{countnodes}{\textsc{CountNodes}} 
\SetKwData{computehostwidth}{\textsc{ComputeHostWidth}} 
\SetKwData{computehostxcoords}{\textsc{ComputeHostXCoords}} 
\SetKwData{leftchild}{\textsc{LeftChild}} 
\SetKwData{rightchild}{\textsc{RightChild}} 
\DontPrintSemicolon
\SetKwInOut{Input}{Input}
\SetKwInOut{Output}{Output}
\SetKwInOut{Note}{Note}
\Input{$H$, the host tree; $W_H$, an array specifying for each $h \in \mathcal{V}(H)$ its width $W_H[h]$; $X_H$, an array specifying for each $h \in \mathcal{V}(H)$ the $x$-coordinate $X_H[h]$ of the left corners of the rectangle representing $h$; $h$, the current node of $H$ (we assume $X_H[v])$ already set)}
\Output{Updates $X_H$ for the left and right subtrees of $h$}
\BlankLine
\tcc{perform a preorder traversal of $H$}
\If{$h \in \mathcal{V}_L(H)$}{
	\Return \tcp*{no children to update}
}
$X_H[\leftchild[h]] \leftarrow X_H[h]$ \tcp*{same as parent}
$X_H[\rightchild[h]] \leftarrow X_H[h] + W_H[\leftchild[h]]$\;
$\computehostxcoords(H, W_H, X_H, \leftchild[h])$ 
$\computehostxcoords(H, W_H, X_H, \rightchild[h])$\;
\Return
\BlankLine
\caption{\textbf{\sc ComputeHostXCoords} procedure.}
\label{alg:compute-host-x-coords}
\end{algorithm}


\begin{algorithm}[t]
\SetKwData{addedge}{\textsc{AddEdgeFromTo}} 
\SetKwData{addtree}{\textsc{AddVerticesAndArcs}} 
\SetKwData{planarize}{\textsc{Planarize}} 
\SetKwData{countnodes}{\textsc{CountNodes}} 
\SetKwData{computehostwidth}{\textsc{ComputeHostWidth}} 
\SetKwData{computehostxcoords}{\textsc{ComputeHostXCoords}} 
\SetKwData{leftchild}{\textsc{LeftChild}} 
\SetKwData{rightchild}{\textsc{RightChild}} 
\DontPrintSemicolon
\SetKwInOut{Input}{Input}
\SetKwInOut{Output}{Output}
\SetKwInOut{Note}{Note}
\Input{$\mathcal{E}$, a plane graph built joining $H$ and $P$; $P$, the parasite tree; $X_P,Y_P$, two array specifying for each $p \in \mathcal{V}(P)$ the $x$-coordinate $X_P[p]$ and the $y$-coordinate $Y_P[p]$ of $p$}
\Output{Updates $X_P[p]$ and $Y_P[p]$ for all $p \in \mathcal{V}_L(P)$}
\BlankLine
\ForEach{$p, q \in \mathcal{V}_L(P)$}{
	\If{$p$ and $q$ share a face of $\mathcal{E}$}{
		add dummy edge $(p,q)$ in $\mathcal{E}$
	}
}
$current_x = 0$\;
$h_{left} \leftarrow$ leaf host reachable from $r(H)$ by a left-child chain\; 
$p \leftarrow$ parasite such that: \\
~~~~~~~~(i) $p$ is a leaf of $P$ \\
~~~~~~~~(ii) $\gamma(p)=h_{left}$ \\
~~~~~~~~(iii) $p$ is incident to a single dummy edge\; 

\While{$p$ is a parasite}{
	$X_P[p] \leftarrow current_x$\;
	$Y_P[p] \leftarrow maximum_y$\;
	$current_x \leftarrow current_x +1$\;
        $p \leftarrow$ next parasite of the path of dummy edges\;
}
\Return
\BlankLine
\caption{\textbf{\sc DrawParasiteLeaves} procedure.}
\label{alg:draw-parasite-leaves}
\end{algorithm}


\begin{algorithm}[t]
\SetKwData{addedge}{\textsc{AddEdgeFromTo}} 
\SetKwData{addtree}{\textsc{AddVerticesAndArcs}} 
\SetKwData{planarize}{\textsc{Planarize}} 
\SetKwData{countnodes}{\textsc{CountNodes}} 
\SetKwData{computehostwidth}{\textsc{ComputeHostWidth}} 
\SetKwData{computehostxcoords}{\textsc{ComputeHostXCoords}} 
\SetKwData{computeparasitexcoords}{\textsc{ComputeParasiteXCoords}}
\SetKwData{leftchild}{\textsc{LeftChild}} 
\SetKwData{rightchild}{\textsc{RightChild}} 
\DontPrintSemicolon
\SetKwInOut{Input}{Input}
\SetKwInOut{Output}{Output}
\SetKwInOut{Note}{Note}
\Input{$P$, the parasite tree; $X_P$, an array specifying for each $p \in \mathcal{V}(P)$ its $x$-coordinate $X_P[p]$; $p$ the current host}
\Output{Updates $X_P$ for internal nodes of the subtree of $P$ rooted at~$p$}
\BlankLine
\tcc{perform a postorder traversal of $P$} 
\If{$p \notin \mathcal{V}_L(P)$}{
   	$\computeparasitexcoords(P,X_P,\leftchild(p))$\; 
	$\computeparasitexcoords(P,X_P,\rightchild(p))$\; 
    \uIf{$p$ has a host-switch child}{
        $p_1 \leftarrow$ non-host-switch child of $p$\;
        $X_P[p] \leftarrow X_P[p_1]$\; 
    }\Else{
	    $X_P[p] \leftarrow (X_P[\leftchild(p)]+X_P[\rightchild(p)])/2$\;
	}
}
\Return\; 
\BlankLine
\caption{\textbf{\sc ComputeParasiteXCoords} procedure.}
\label{alg:compute-parasite-x-coords}
\end{algorithm}

} 

\end{document}